\newtheorem{thm}{Theorem}
\newtheorem{lem}{Lemma}
\newtheorem{rem}{Remark}
\newcommand\mystack[2]{\genfrac{}{}{0pt}{}{#1}{#2}} 
\begin{document}
{\onecolumn{This paper was submitted for publication in IEEE Transactions on  Communications
on July 11, 2019.

\noindent\textcopyright 2019 IEEE. Personal use of this material is permitted.  Permission from IEEE must be obtained for all other uses, in any current or future media, including reprinting/republishing this material for advertising or promotional purposes, creating new collective works, for resale or redistribution to servers or lists, or reuse of any copyrighted component of this work in other works.}}

\newpage
\twocolumn

\title{Rate Analysis of Cell-Free Massive MIMO-NOMA With Three Linear Precoders}%

\author{Fatemeh~Rezaei,~\IEEEmembership{Student Member,~IEEE,}
       Chintha~Tellambura,~\IEEEmembership{Fellow,~IEEE,}
       Aliakbar~Tadaion,~\IEEEmembership{Senior Member,~IEEE,}
       and~Ali Reza~Heidarpour,~\IEEEmembership{Student Memeber,~IEEE}

\vspace{-3ex}}
\maketitle

\begin{abstract}
Although the  hybrid  of cell-free (CF) massive multiple-input multiple-output (MIMO)  and non-orthogonal multiple access (NOMA) promises  massive spectral efficiency gains,  the  type of  precoders employed at  the access points (APs) impacts  the gains.    In this paper,  we thus  comprehensively  evaluate the system  performance with  maximum ratio transmission (MRT), full-pilot zero-forcing (fpZF) and  modified regularized ZF (mRZF) precoders. We  derive their  closed-form  sum rate expressions by considering  Rayleigh fading channels,  the effects of intra-cluster pilot contamination, inter-cluster interference, and imperfect successive interference cancellation (SIC). Our results reveal that this system  supports   significantly more  users  simultaneously at the same coherence interval compared to  its OMA equivalent. However, intra-cluster pilot contamination and imperfect SIC degrade the system  performance  when the  number of users is low.  Moreover, with  perfect SIC, mRZF and fpZF significantly outperform MRT. Also, we show that this system   with either mRZF or fpZF precoding  outperforms  OMA systems with MRT. The analytical findings are verified by  numerical results. 
\end{abstract}

\begin{IEEEkeywords}
NOMA, cell-free massive MIMO, MRT, fpZF, modified RZF, achievable sum rate. 
\end{IEEEkeywords}

\IEEEpeerreviewmaketitle
\section{Introduction}
\subsection{Background and Scope}
\IEEEPARstart{I}{nternational} mobile telecommunications (IMT)-$2020$ standard is envisioned to support  peak data rates up to  $10$ Gb/s for low mobility users and  $1$ Gb/s for high mobility users, $1$ ms over-the-air latency, and a network energy
efficiency improvement of 100$\times$ of 4G networks \cite{IMT}. These requirements have necessitated  the  development of  new wireless  technologies  that will offer higher data rates, higher energy efficiency, and ultra low latency.

One such technology is the use of  distributed large antenna arrays, which have the   ability to increase ergodic sum rate \cite{Matthaiou2013}, extend coverage \cite{Wang2013}, and improve energy efficiency \cite{He2016}.  Thus,  distributed  access points (APs) in each cell of a cellular network  have been already studied  \cite{Matthaiou2013, Huang2011}. The use of distributed APs enables the  efficient utilization of spatial resources \cite{Wang2015}. However, inter-cell interference is  inherent   in all  cell-centric systems and becomes a major performance limiting factor  \cite{Interdonato2019}. To overcome this, while preserving the main benefits  of massive MIMO (m-MIMO), cell-free (CF) mMIMO has been proposed \cite{Ngo2017, Nayebi2017}. The basic premise of it is that a large number of spatially-distributed APs  serve many single-antenna users in the same time-frequency resources \cite{Ngo2017}. Thus,  each user  is served by all the APs it can reach, and hence, it does not  experience cell boundaries (cell-free).  A  central processing unit (CPU) coordinates the APs,  which  are connected to it through a fronthaul network. Locality of  the operations of each AP is another key  idea, which  minimizes the fronthaul overhead. Thus, each AP performs precoding based on only the estimates  of the  channels from itself to all the users.  These  downlink channels can be estimated with the help of pilots transmitted by the users in the uplink, thereby exploiting  the   channel reciprocity inherent in time-division duplexing (TDD). This architecture offers  increased macro-diversity and favorable propagation with negligible inter-user interference.   And it   outperforms the  conventional cellular counterparts  by exploiting the best of both collocated mMIMO and network MIMO systems \cite{Ngo2017, Nayebi2017, Nguyen2017, Ngo2018, Attarifar2019}.

Precoder  design, power control, hardware impairments and other factors for CF mMIMO systems have  been investigated.   For instance, \cite{Ngo2017} and \cite{Nayebi2017}  investigate  conjugate beamforming (CB) (which maximizes the signal gain at the intended user), and zero-forcing (ZF) (which nulls  the inter-user interference at the expense of gain losses), and they show that  significant  spectral efficiency  gains are achievable  over conventional   small-cell systems. The problem of joint power control and load balancing using ZF and maximum ratio combining (MRC) processing  is investigated \cite{Nguyen2018}. How the spectral efficiency is impacted by  hardware impairments at both users and APs  has  also been  studied \cite{Zhang2018}; the key insight is that the negative  effects of hardware impairment vanish as  the number of APs  increases.

Non-orthogonal multiple access (NOMA) also  achieves   high  spectral efficiency gains\cite{Islam2017, Ali2017}. It however is a paradigm shift from   conventional orthogonal  multiple access (OMA) techniques such as time division multiple access and frequency division multiple access, where an orthogonal channel must be  created for each different user. In contrast,  NOMA serves   multiple users  simultaneously on  each orthogonal channel. This can be done by exploiting channel gain disparities and then performing individual user signal detection via successive interference cancellation (SIC)  \cite{Islam2017}.

The hybrid of  \textbf{co-located} mMIMO and NOMA offers  a great potential to support low latency massive connectivity requirements of the next-generation wireless networks while further improving the spectral efficiency of NOMA-based systems \cite{Chen2018, Kudathanthirige2019, Senel2019}. Thus, integrating NOMA with  CF mMIMO  may reap further gains and is therefore a  critically  important research topic.
\subsection{Problem Statement and Contributions}Hybrid   CF massive MIMO-NOMA  offers  tremendous potential to  improve the spectral efficiency. To achieve this goal,  two key  factors are essential.  First, precoding of  superposition-coded signals to multiple user-clusters is necessary. Second, the user multiplexing  in each cluster  must be on the  basis of their   channel power gain differentials.   However,  so far  such hybrid  designs have been investigated  by only few  papers \cite{Li2018,Li2019,Bashar2019}.  Thus, the  main objective of this paper is to add to this  literature and further investigate these systems.      

In particular,   for CF massive MIMO-NOMA systems, we investigate   three  linear  precoding schemes, with the same front-hauling overhead. These are  maximum ratio transmission (MRT), full-pilot zero-forcing (fpZF)  and modified regularized ZF (mRZF). Of these three, while its performance in mMIMO systems has been investigated  \cite{Wagner2012, Huh2012,  Rezaei2019}, mRZF has not been studied for CF  massive MIMO before.   The  advantages of mRZF  include  its ability to balance the interference suppression and desired signal power and also having additional parameters to be optimized.  

Normal  ZF and RZF precoders require exchanging  instantaneous channel state information (CSI) among the APs -- the main benefit of  mRZF and fpZF is the elimination of  this exchange  and that each AP computes  its precoder with only its local CSI. This is a big advantage as mRZF and fpZF have the same front-hauling overhead as  MRT \cite{Chien2016,Interdonato2018}. Unlike MRT which maximizes the signal gain at the intended cluster and ignores the inter-cluster interference, fpZF sacrifices some of the array gain to cancel the inter-cluster interference  \cite{Chien2016,Interdonato2018}. On the other hand, mRZF balances the inter-cluster interference  mitigation and intra-cluster power enhancement.  Note that the   optimal precoding is  nonlinear, so these linear precoders are  sub-optimal. But they offer  high performance with affordable computational complexity, making them ideal for  practical large  MIMO systems \cite{Wagner2012}.

The main contributions of this paper on CF massive MIMO-NOMA systems  can be summarized as follows:
\begin{enumerate}
    \item We derive the  closed-form  downlink  sum rate   when the APs  employ  MRT or  fpZF precoders, considering the effects of intra-cluster pilot contamination, and  inter-cluster interference. The users rely on the statistics for their effective channels for decoding, and the SIC process is imperfect. 
    \item We also analyze the performance when the  APs employ  mRZF precoding. Since the closed-form analysis of it  with finite system parameters is  difficult (if not impossible), we analyze the achievable rate when  the number of clusters ($N$) and the number of antennas at each AP ($L$) grow infinitely large while  $ \frac L N $  is a  finite ratio.
  
  \item  We  show that NOMA allows  a significant number of users to be  supported simultaneously at the same coherence interval compared to its counterpart OMA. For instance, with  $K$ users in each cluster,  NOMA based CF massive MIMO can support $K$ times the users that of OMA. Moreover, for a large number of users, NOMA outperforms OMA; however, for  low number of users, the sum rate of  NOMA is lower than that of OMA due to the effects of intra-cluster pilot contamination and imperfect SIC. It is further shown that given perfect SIC, mRZF and fpZF significantly outperform MRT. We also show that  either mRZF or fpZF outperforms OMA systems with MRT. 
    \item  Numerical results are also presented to support our findings.
\end{enumerate}
\subsection{Previous Contributions on  CF massive MIMO-NOMA }
 As mentioned before, the investigation of CF massive MIMO-NOMA systems has been sparse, except for  \cite{Li2018,Li2019,Bashar2019}.  In fact \cite{Li2018} is the first paper to study the design of such systems.  It  considers  single-antenna APs that use conjugate beamforming precoders. It thus analyzes degradations due to  estimated/imperfect CSI at APs, SIC, and statistical CSI at users.  On the other hand,  \cite{Li2019} generalizes to   multiple-antenna APs and   derives minimum mean square error (MMSE) uplink  and  downlink channel estimates, and for both these cases, the achievable rate is derived. Reference \cite{Li2019}    also considers  spatial correlation among the multiple antennas  at each AP and intra-cluster pilot contamination and  quantifies  the adverse impact of  intra-cluster pilot contamination and error propagation due to imperfect SIC at the user nodes.  The  only other work  \cite{Bashar2019} deals with  the  max-min fairness based bandwidth efficiency problem. It develops an optimal  algorithm and mode switching  between NOMA and OMA for maximum bandwidth efficiency.  

The above paragraph clarifies  the differences between this paper and  \cite{Li2018,Li2019,Bashar2019}.   The focus of this paper is  to comparatively evaluate the  three types of  practical precoders. We believe that this paper is the first to do that  in the context of CF massive MIMO-NOMA. 

\subsection{General works on MIMO NOMA} Since the NOMA literature is vast, we mention only few works here. NOMA outperforms  OMA from user and system throughput perspectives  \cite{Benjebbour2014}. In \cite{Benjebbour2013},  multi-user power allocation, user scheduling, and error propagation in SIC for NOMA are  investigated. MIMO-NOMA outperforms conventional MIMO-OMA systems \cite{Zeng2017,Ding2016}. MIMO-NOMA outperforms  MIMO-OMA  in terms of both sum channel capacity and ergodic sum capacity \cite{Zeng2017}. Furthermore, the user outage probability in a MIMO-NOMA cluster is studied in \cite{Ding2016}. Beamforming, user clustering, and power allocation for MIMO-NOMA are further investigated in \cite{Huang2018}.

Hybrid massive MIMO-NOMA,  which outperforms  standalone mMIMO and NOMA schemes, has been studied in  \cite{Senel2019, Vaezi2019}. The outage probability with perfect user ordering and limited feedback is investigated in \cite{PDing2016}.  In \cite{Chen2018}, a user pairing and pair scheduling algorithm is proposed to enhance the spectral efficiency of massive MIMO-NOMA systems. Moreover, in \cite{Kudathanthirige2019}, user clustering and pilot assignment
schemes for multi-cell massive MIMO-NOMA are investigated by employing the characteristics of correlated fading channels.
\subsection{Structure and Notations}
This  paper is organized as follows. The system model is introduced in Section \ref{system_modelA}. In Section \ref{acheivedRate}, the achievable sum rate with linear precoding techniques is derived. In Section \ref{sim}, the analytical
results are confirmed through simulation examples. Section \ref{conclusion} concludes the paper.

\textit{Notation}: Lower-case bold  and upper-case bold denote vectors and matrices, respectively. $\mathbf{I}_n$ represents the $n \times n$ identity matrix. $\mathbf{A}^\mathrm{T}$, $\mathbf{A}^\mathrm{H}$, $\text{tr}[\mathbf{A}]$, and $[\mathbf{A}]_{(m,n)}$ denote transpose,  Hermitian transpose, trace, and the $(m,n)$th element of  matrix $\mathbf{A}$, respectively.  $\mathbb{E} \{ \cdot\}$ denotes the statistical expectation. Finally, $\mathcal{CN}(\bm{\mu},\mathbf R ) $ is a complex Gaussian  vector with  mean   $\bm \mu$ and co-variance matrix  $\mathbf R$.
\section{System Model and Preliminaries}\label{system_modelA}
Here,  we describe the system model, channel model, and transmission model in detail.
\begin{figure}
\centering
\includegraphics[width=9cm, height=7.5cm]{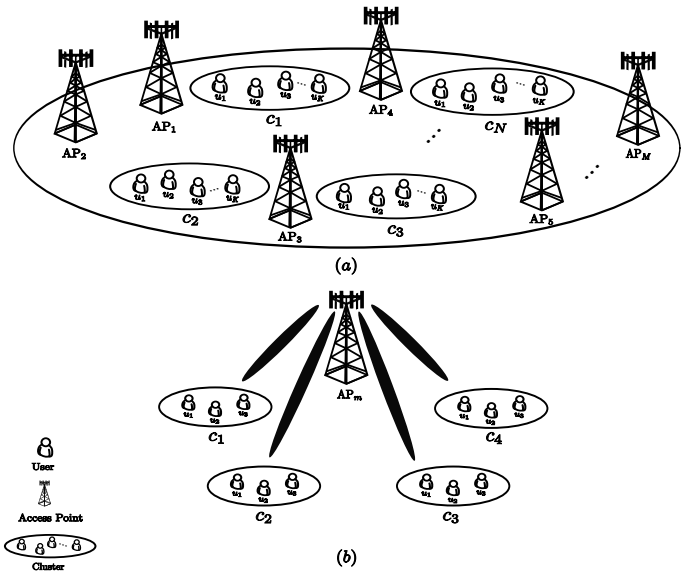}
\caption{ (a) System model of cell-free massive MIMO-NOMA  (b) An AP transmits  to four  clusters  $\{ c_1, \ldots, c_4\}$, each with  three  users.}\label{system_model}
\vspace{-2ex}
\end{figure}
\subsection{System and Channel Models}
Consider the downlink transmission of  CF massive MIMO-NOMA. This system has  $M$  APs   and $KN$ single-antenna users, which are grouped into $N$ clusters with $K\; (K \ge 2)$ users per cluster and NOMA is applied among the users in the same cluster. Each AP is equipped with $L$ antennas (Fig.~\ref{system_model}.a). The total  $M$ APs  serve $KN$ users in the same time-frequency resource block, where $KN\ll ML$. All  the  APs  are connected to a  CPU  via an error-free  fronthaul network to achieve coherent processing  \cite{Ngo2017}. This network  carries only  the payload data,  large-scale parameters and power control coefficients that change slowly. Each  AP   computes its  precoder  based on the estimates of the channel states between itself and the users. Importantly,  it does not need the knowledge of the channel states between the users and other APs.   

The downlink channel between  the $m$th AP ($m=1, \ldots, M$)  and the $k$th user ($k=1,\ldots, K$)  in the $n$th cluster ($n=1,\ldots, N)$ is the  complex Gaussian random vector 
\begin{equation}\label{channel}
\mathbf{h}_{mnk} \sim  \mathcal{CN}\left(\mathbf{0}, \beta_{mnk}\mathbf{I}_L\right),
\end{equation}
where $\{\beta_{mnk}\}$ are the set of  the large-scale fading coefficients.  Each  AP is assumed  to   know its own set of  large-scale coefficients \cite{Ngo2017}. This assumption is justifiable because  they    are quasi-static  and hence
only need to be estimated once about every 40 coherence
time intervals \cite{tse2004}.  This model \eqref{channel} also  presumes that small-scale fading is    Rayleigh distributed. 
\subsection{Uplink Pilot Transmission}
The users transmit   pilot sequences in the uplink,  and  the APs use them to estimate the downlink  channels.  In order to minimize the channel estimation overhead in NOMA,  the  same  pilot  sequence of length $\tau$ samples, is  shared  among  the  users  within  each  cluster,  but  different clusters are assigned mutually orthogonal pilots\footnote{It is different from OMA where all the users in all the clusters are assigned with different mutually orthogonal pilots.}\cite{Li2018}. Then, the pilot sequence for the $k$th user in the $n$th cluster is $\sqrt{\tau}\bm{\phi}_{n} \in \mathcal{C}^{\tau \times 1}$ satisfying $\|\bm{\phi}_{n}\|^2 = 1$ where $N\leq \tau$. Accordingly, the pilot sequences allocated to the $N$ clusters are mutually orthogonal implying that $\bm{\phi}_{n}^{\mathrm{H}} \bm{\phi}_{j} = 0$ $(n \neq j)$. The received pilot signal at the $m$th AP ($\mathbf{Y}_{m}^p \in \mathcal{C}^{L \times \tau}$) can then be expressed as
\begin{equation}\label{pilot_signal}
\mathbf{Y}_{m}^p = \sqrt{\tau p_p} \sum\limits_{n=1}^N{ \sum\limits_{k=1}^K{\mathbf{h}_{mnk}\bm{\phi}_{n}^\mathrm{H}}+ \mathbf{N}_m}, \quad \forall m
\end{equation} 
where $p_p$ is the pilot transmit power and $\mathbf{N}_m \in \mathcal{C}^{L \times \tau}$ is a Gaussian noise matrix with i.i.d $\mathcal{CN}(0, 1)$ elements. Then, the $m$th AP estimates $\mathbf{h}_{mnk}$ using minimum mean square error (MMSE) estimation. To do so, the received pilot signal at the $m$th AP \eqref{pilot_signal} is projected onto ${\bm{\phi}}_{n}$ which yields
\begin{equation}\label{projection}
\mathbf{\tilde{y}}_{mn}^p = \sqrt{\tau p_p} \sum\limits_{k=1}^K{\mathbf{h}_{mnk}}+ \tilde{\mathbf{n}}_{mn},
\end{equation}
where $\tilde{\mathbf{n}}_{mn} = \mathbf{N}_m {\bm{\phi}}_{n} \sim \mathcal{CN}(\mathbf{0},\mathbf{I}_L)$.

By using \eqref{projection}, the MMSE estimate  of $\mathbf{h}_{mnk}$ can be expressed as $\hat{\mathbf{h}}_{mnk} =  c_{mnk}\mathbf{\tilde{y}}_{mn}^p$,  where $c_{mnk}$ is given by
\begin{equation}
c_{mnk} = \frac{\sqrt{\tau p_p}\beta_{mnk}}{1+ \tau p_p \sum \limits_{i=1}^K{\beta_{mni}}}.
\end{equation}
Since  $\mathbf{\tilde{y}}_{mn}^p$ is Gaussian distributed, $\hat{\mathbf{h}}_{mnk}$ can be written as,
\begin{equation}\label{estimate}
\hat{\mathbf{h}}_{mnk} = \sqrt{\theta_{mnk}} \mathbf{\nu}_{mn},
\end{equation} 
where $\mathbf{\nu}_{mn} \sim \mathcal{CN}(\mathbf{0}, \mathbf{I}_L)$ and $\theta_{mnk}$ is equal to
\begin{equation}\label{theta}
\theta_{mnk}= \frac{1}{L}\mathbb{E} \left\{\left\| \hat{\mathbf{h}}_{mnk}\right\|^2 \right\} = \frac{\tau p_p\beta_{mnk}^2}{1+ \tau p_p \sum\limits_{i=1}^K{\beta_{mnk}}}.
\end{equation}
The channel estimation error can then be defined as $\bm{\epsilon} _{mnk} = \mathbf{h}_{mnk}- \hat{\mathbf{h}}_{mnk}$ where $\bm{\epsilon}_{mnk} \sim \mathcal{CN}(\mathbf{0}, (\beta_{mnk}- \theta_{mnk})\mathbf{I}_L)$. Since the users in the same cluster use the same pilot sequence, their channels are parallel as shown in \eqref{estimate}. Mathematically, this can be written as
\begin{equation}\label{parallel}
 \hat{\mathbf{h}}_{mnk} = \frac{\beta_{mnk}}{\beta_{mni}}\hat{\mathbf{h}}_{mni},
\end{equation}
where $\beta_{mnk}$ is already given by \eqref{channel}.

\subsection{Downlink Data Transmission Model}
The superposition coded data signal for the $K$ users in the $n$th cluster is expressed as
\begin{equation}\label{downlink_data}
s_n  = \sum\limits_{k = 1}^K {\sqrt {p_{nk}} s_{nk} }, \quad \forall n . 
\end{equation}
In \eqref{downlink_data}, $s_{nk}$ and $p_{nk}$ denote the data signal and the transmitted power allocated to the $k$th user in the $n$th cluster. Also, $p_{nk} = p_{t} \lambda_{nk}$ where $p_{t}$ is the total transmit power of each AP and $\{\lambda_{nk}\}$ are the set of  power coefficients that satisfies $\sum_{n=1}^N{\sum_{k = 1}^K{\lambda_{nk}}} = 1$. Furthermore,  the different  data signals are mutually uncorrelated: 
\begin{equation}\label{Eqn:2}
\mathbb{E}\{s_{nk} s_{mi}^* \}  = \left\{ \begin{array}{l}
 1, \quad m = n \quad \& \quad k = i \\ 
 0,\quad \rm{else} \\ 
 \end{array} \right., 
\end{equation}
where $m,n \in \{ 1,2, \ldots, N\}$ and $k,i \in \{ 1,2, \ldots, K\}$. Therefore,
 $\mathbb{E}\{ |s_n|^2\} =  \sum_{k = 1}^K {p_{nk}} = p_{t} \lambda_n$, where $\lambda_n =  \sum_{k=1}^K{\lambda_{nk}}$ accounts for the power allocation coefficient for the $n$th cluster.

The $m$th AP ($ m=1,\ldots, M$) transmits the signal 
\begin{align}\label{Eqn:transmit}
\mathbf{x}_m = \sum\limits_{n=1}^N{\mathbf{w}_{mn} s_n}, 
\end{align}
where $\mathbf{w}_{mn} \in \mathcal{C}^{L}$ represents the spatial directivity of the signals sent to the users in the $n$th cluster.  Note that each AP  precodes the transmitted signals for all the users in the same cluster with the same beamforming vector $\mathbf{w}_{mn}$, i.e., each AP has $N$  precoding vectors (Fig.~\ref{system_model}.b).

Since the $KN$ users are served simultaneously by $M$ APs, the received signal at the $k$th user in the $n$th cluster can be expressed  as
\begin{align}\label{Eqn:precode2}
\nonumber {y}_{nk} &= \underbrace{\sum\limits_{m = 1}^M {\sqrt{p_{nk}}\mathbf{h}_{mnk}^\mathrm{H}\mathbf{w}_{mn} s_{nk}}}_\text{desired signal} + \underbrace{\sum\limits_{m = 1}^M {\mathbf{h}_{mnk}^\mathrm{H}\mathbf{w}_{mn} \sum\limits_{\mystack{i=1}{ i \ne k} }^K {\sqrt{p_{ni}} s_{ni}}}}_\text{intra-cluster interference before SIC} \\
  &+ \underbrace{\sum\limits_{m = 1}^M {\mathbf{h}_{mnk}^\mathrm{H} \sum\limits_{\mystack{n'=1}{n' \ne n}}^N{\mathbf{w}_{mn'} s_{n'}}}}_\text{inter-cluster interference} + {n}_{nk}.
\end{align}
In \eqref{Eqn:precode2}, ${n}_{nk} \sim \mathcal{CN}({0},1)$ while the first, second, and the third terms are desired signal, intra-cluster interference before SIC and inter-cluster interference, respectively.

In TDD CF massive MIMO systems, with sufficiently large number antennas at each AP (e.g., $5-10$ antennas per AP \cite{zChen2018}), the instantaneous channel coefficients can be well approximated by their corresponding expected values. This phenomenon is referred to as ``\textit{channel hardening}" which substantially reduces the effective channel gain fluctuations; thereby rendering the use of downlink pilot training unnecessary \cite{Ngo2017, zChen2018}.  To apply power domain NOMA, we thus assume that the users in the $n$th cluster are ordered based on the mean of the effective channel gains as follows \cite{Li2018,Li2019}:

\begin{equation}\label{Eqn:NOMA}
\Gamma_{n1} \geq \Gamma_{n2}\geq  \cdots \geq \Gamma_{nK},
\end{equation}
where  $\Gamma_{nk} = \mathbb{E}\left\{\left|\sum_{m = 1}^M {\hat{\mathbf{h}}_{mnk}^\mathrm{H}\mathbf{w}_{mn}}\right|^2 \right \}$.
 
The user ordering can be done by a central entity (e.g., the CPU) that collects all information of the mean of effective channel strengths and then feeds back the result of user ordering  and power allocation to the APs; higher powers are allocated to the users with lower channel strength; i.e., $p_{n1} \leq p_{n2} \leq \cdots \leq p_{nK}$. Hence, the $k$th user applies SIC to decode its own signal. More precisely, it decodes the signals of the users with higher powers and treats the others as interference. In particular, the $k$th user decodes the signal intended for the $i$th user ($\forall i\geq k$) and treats the signals of the other users ($\forall i<k$) as interference.

Using the statistical CSI knowledge of the effective channels at users, i.e., $\mathbb{E} \left\{ \mathbf{h}_{mnk}^\mathrm{H} \mathbf{w}_{mn}\right\} \; (\forall m,n,k$),  \eqref{Eqn:precode2} can be rewritten as follows:
\begin{align}\label{Eqn:sic}
\nonumber {y}_{nk} &= \underbrace{\sum\limits_{m = 1}^M {\sqrt{p_{nk}} \mathbb{E} \left\{\mathbf{h}_{mnk}^\mathrm{H}\mathbf{w}_{mn} \right\} s_{nk}}}_{T_0: \text{desired signal}} \\ 
\nonumber&+ \underbrace{\sum\limits_{m = 1}^M {\sqrt{p_{nk}} \left(\mathbf{h}_{mnk}^\mathrm{H}\mathbf{w}_{mn} - \mathbb{E} \left\{\mathbf{h}_{mnk}^\mathrm{H}\mathbf{w}_{mn}\right\}\right) s_{nk}}}_{T_1: \text{beamforming gain uncertainty}}  \\
\nonumber &+\underbrace{\sum\limits_{m = 1}^M {\sum\limits_{i = 1}^{k-1}{ \sqrt{p_{ni}} \mathbf{h}_{mnk}^\mathrm{H}\mathbf{w}_{mn}s_{ni} }}}_{T_2: \text{intra-cluster interference after SIC}}  \\ 
\nonumber &+ \underbrace{\sum\limits_{m = 1}^M {\sum\limits_{i = k+1}^K{ \sqrt{p_{ni}} \left(\mathbf{h}_{mnk}^\mathrm{H}\mathbf{w}_{mn} s_{ni} - \mathbb{E} \left\{\mathbf{h}_{mnk}^\mathrm{H}\mathbf{w}_{mn}\right\} \hat{s}_{ni}\right)}}}_{T_3: \text{residual interference due to imperfect SIC}}\\
 &+ \underbrace{\sum\limits_{m = 1}^M {\mathbf{h}_{mnk}^\mathrm{H} \sum\limits_{\mystack{n' = 1}{n' \ne n}}^N{\mathbf{w}_{mn'} s_{n'}}}}_{T_4: \text{Inter-cluster interference}} + {n}_{nk},
\end{align}
where the first term $T_0$ is the desired signal and the second term $T_1$ is the beamforming gain uncertainty. The  third term $T_2$ is intra-cluster interference caused by the signals of the users which are considered as the interference at the $k$th user. Note that because the users know  statistical CSI only, intra-cluster pilot contamination and channel estimation error, SIC may not be perfect. The forth term $T_3$ thus represents the error propagation due to the imperfect SIC where $\hat{s}_{ni}$ is the estimate of $s_{ni}$. Here, we adopt the linear MMSE estimation \cite{Kay1993} where $s_{ni}$  and its estimate $\hat{s}_{ni}$ can be assumed as jointly Gaussian distributed with a normalized correlation coefficient. The relationship between $s_{ni}$ and $\hat{s}_{ni}$ can then be written as

\begin{equation}\label{Eqn:decod}
s_{ni} = \rho_{ni} \hat{s}_{ni} + e_{ni},
\end{equation}
where $\hat{s}_{ni}\sim \mathcal{CN}({0},1) $, $e_{ni} \sim \mathcal{CN}({0},\sigma^2_{e_{ni}}/[1+\sigma^2_{e_{ni}}])$ is the estimation error, statistically independent of $\hat{s}_{ni}$, and  $ \rho_{ni} =1 / \sqrt{1+\sigma^2_{e_{ni}}}$ . The correlation coefficient $0 \leq \rho \leq 1$  reflects the quality of the estimation and quantify the severity of the SIC imperfection. The value of $\rho$ is determined by channel related  issues  (fading  and  shadowing) and other factors \cite{Li2011}; the greater its value, the greater the correlation between $\hat{s}_{ni}$ and $s_{ni}$ and  better the SIC performance \cite{Li2018}.

\section{Downlink Achievable Rate}\label{acheivedRate}
Here, we analyze  the downlink sum rates achievable with  MRT, fpZF, and mRZF. To this end, we first derive  the effective signal-to-interference-plus-noise ratio (SINR) at each user. It then leads to the total downlink sum rate.  

According to \eqref{Eqn:sic}, since data signals intended to different users are mutually uncorrelated and the white additive noise is independent from the data symbols and the channel coefficients, it is easy to check that $T_i, \forall i$ and $n_{nk}$ are mutually uncorrelated. Therefore, by considering the first term in \eqref{Eqn:sic} as the desired signal and the remaining terms as an effective noise, invoking the argument in \cite{Marzetta2016}, the SINR at the $k$th user in the $n$th cluster $\gamma_{nk}$ can be derived as \eqref{Eqn:INTERFERENCE}, shown at the top of the next page, where $\eta_{nk}$ is defined as $\eta_{nk} \triangleq \sum_{m = 1}^M {\mathbf{h}_{mnk}^\mathrm{H}\mathbf{w}_{mn}}$. 
\begin{figure*}
\begin{equation}\label{Eqn:INTERFERENCE}
\gamma_{nk} = \frac{p_{nk} \left|\mathbb{E} \left\{\eta_{nk} \right\}\right|^2 }
{p_{nk} \mathbb{E} \left\{\left|\left(\eta_{nk} - \mathbb{E} \left\{\eta_{nk}\right\}\right)\right|^2\right\}+\sum\limits_{i = 1}^{k-1}{p_{ni}\mathbb{E}\left\{|\eta_{nk}|^2\right\}}+\sum\limits_{i = k+1}^K { p_{ni}\mathbb{E} \left\{\left| \eta_{nk} s_{ni} - \mathbb{E} \left\{\eta_{nk}\right\} \hat{s}_{ni}\right|^2 \right\}}+\sum\limits_{\mystack{n'=1}{n' \ne n}}^N{p_{n'} \mathbb{E} \left\{ \left|\eta_{n'k}\right|^2 \right\}}+1}.
\end{equation}
\rule{\textwidth}{0.25mm}
\end{figure*}
The achievable rate for the $k$th user in the $n$th cluster can then be computed as
\begin{equation}\label{Eqn:rate}
\mathcal{R}_{nk} = \zeta {\rm{log}}_2(1+\gamma_{nk}),
\end{equation}
where $\zeta = {(\tau_c -\tau)}/{\tau_c}$ is the pre-log factor and $\tau_c$ is the coherence interval.

The achievable rate $\mathcal{R}_{nk}$ in \eqref{Eqn:rate} depends on the precoding process at the APs. In order to design the precoding matrices at the APs, we first define  $\bar{\mathbf{H}}_m  \triangleq \mathbf{Y}_{m}^p \boldmath{\Phi} $ where $\bm{\Phi} =[\bm{\phi}_1,\bm{\phi}_2, \ldots,\bm{\phi}_N] \in \mathcal{C}^{\tau \times N}$ and $\mathbf{Y}_{m}^p \in \mathcal{C}^{L \times \tau}$ is given by \eqref{pilot_signal}. $\bar{\mathbf{H}}_m$ can be then written as

\begin{equation}\label{Eqn:h_bar}
\bar{\mathbf{H}}_m = \left[\sqrt{\tau p_p} \sum\limits_{k=1}^K{\mathbf{h}_{m1k}}+ \tilde{\mathbf{n}}_m,  \ldots, \sqrt{\tau p_p} \sum\limits_{k=1}^K{\mathbf{h}_{mNk}}+ \tilde{\mathbf{n}}_m\right].
\end{equation}
Therefore,  $\bar{\mathbf{H}}_m$ has independent columns and $\bar{\mathbf{h}}_{mn} \sim \mathcal{CN}(\mathbf{0}, (1 +\tau p_p \sum\limits_{k=1}^k{\beta_{mnk}})\mathbf{I}_L)$ is the $n$th column of $\bar{\mathbf{H}}_m$ i.e., $\bar{\mathbf{h}}_{mn} = \bar{\mathbf{H}}_m \mathbf{e}_n$ where  $\mathbf{e}_{n}$ denotes the $n$th column of $\mathbf{I}_N$. Then, the channel estimate can be written as
\begin{equation}\label{Eqn:estimate}
\hat{\mathbf{h}}_{mnk} = c_{mnk} \bar{\mathbf{h}}_{mn}. 
\end{equation}

We next  assume that the APs use either MRT, fpZF or mRZF to precode data signals. For each precoder,  we then derive the closed-form achievable sum rate.
\subsection{MRT beamforming}
With  MRT, the $m$th AP computes the following  precoding vector   for the $k$th user in the $n$th cluster: 
\begin{equation}\label{Eqn:mrt}
\mathbf{w}_{mn} = \frac{\bar{\mathbf{h}}_{mn}}{\sqrt{\mathbb{E} \left\{ \parallel \bar{\mathbf{h}}_{mn} \parallel^2\right\}}}.
\end{equation}

This precoder  \eqref{Eqn:mrt} is used for  all  the users in the $n$th cluster. 
\begin{thm}
In the CF massive MIMO-NOMA system with MRT precoding, $\gamma_{nk}$ in \eqref{Eqn:INTERFERENCE}, for finite values of $M, L ,N$ and $K$, is given by \eqref{Eqn:mrt_rate}, as shown at the top of the next page.
\end{thm}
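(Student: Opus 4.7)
My plan is to evaluate \eqref{Eqn:INTERFERENCE} term by term under the MRT precoder \eqref{Eqn:mrt}, exploiting the facts that (i) the channels across different APs are mutually independent, (ii) the MMSE decomposition $\mathbf{h}_{mnk}=\hat{\mathbf{h}}_{mnk}+\bm{\epsilon}_{mnk}$ gives independent Gaussian summands, and (iii) by \eqref{Eqn:estimate}, $\hat{\mathbf{h}}_{mnk}=c_{mnk}\bar{\mathbf{h}}_{mn}$ so users sharing a cluster pilot have parallel channel estimates. Writing $\bar{\gamma}_{mn}=1+\tau p_p\sum_{i=1}^{K}\beta_{mni}$, the MRT precoder reduces to $\mathbf{w}_{mn}=\bar{\mathbf{h}}_{mn}/\sqrt{L\bar{\gamma}_{mn}}$, which is the single ingredient that feeds every expectation below.

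The first two key quantities are $\mathbb{E}\{\eta_{nk}\}$ and $\mathbb{E}\{|\eta_{nk}|^2\}$. For the mean, $\bm{\epsilon}_{mnk}$ is independent of $\bar{\mathbf{h}}_{mn}$ and zero mean, so only $\hat{\mathbf{h}}_{mnk}^{\mathrm H}\mathbf{w}_{mn}=c_{mnk}\|\bar{\mathbf{h}}_{mn}\|^2/\sqrt{L\bar{\gamma}_{mn}}$ survives, which evaluates to $\sqrt{L\theta_{mnk}}$ per AP using $\mathbb{E}\{\|\bar{\mathbf{h}}_{mn}\|^2\}=L\bar{\gamma}_{mn}$ and the identity $c_{mnk}^2\bar{\gamma}_{mn}=\theta_{mnk}$; hence $|\mathbb{E}\{\eta_{nk}\}|^2=L\bigl(\sum_m\sqrt{\theta_{mnk}}\bigr)^2$. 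For the second moment, I split the double sum over $m,m'$ into diagonal and off-diagonal parts. Off-diagonal terms (with $m\neq m'$) factor by AP-independence into products of the means already computed. The diagonal term needs $\mathbb{E}\{|\hat{\mathbf{h}}_{mnk}^{\mathrm H}\mathbf{w}_{mn}|^2\}=c_{mnk}^2\mathbb{E}\{\|\bar{\mathbf{h}}_{mn}\|^4\}/(L\bar{\gamma}_{mn})=(L+1)\theta_{mnk}$, invoking the Wishart fourth moment $\mathbb{E}\{\|\mathbf{v}\|^4\}=L(L+1)\sigma^4$ for $\mathbf{v}\sim\mathcal{CN}(\mathbf{0},\sigma^2\mathbf{I}_L)$, plus the independent error contribution $\mathbb{E}\{|\bm{\epsilon}_{mnk}^{\mathrm H}\mathbf{w}_{mn}|^2\}=\beta_{mnk}-\theta_{mnk}$. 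Summing yields the clean identity $\mathbb{E}\{|\eta_{nk}|^2\}=|\mathbb{E}\{\eta_{nk}\}|^2+\sum_m\beta_{mnk}$, so the beamforming-gain-uncertainty term $T_1$ contributes exactly $p_{nk}\sum_m\beta_{mnk}$.

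For the inter-cluster term $\mathbb{E}\{|\eta_{n'k}|^2\}$ with $n'\neq n$, the precoder $\mathbf{w}_{mn'}$ is a function of $\bar{\mathbf{h}}_{mn'}$, which is independent of $\mathbf{h}_{mnk}$ because different clusters are assigned orthogonal pilots (so $\bar{\mathbf{h}}_{mn'}$ collects channels only from cluster $n'$ and noise). Consequently, cross-AP terms vanish by zero-mean of $\mathbf{h}_{mnk}$, and $\mathbb{E}\{|\mathbf{h}_{mnk}^{\mathrm H}\mathbf{w}_{mn'}|^2\}=\beta_{mnk}\mathbb{E}\{\|\mathbf{w}_{mn'}\|^2\}=\beta_{mnk}$, giving $\sum_m\beta_{mnk}$. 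The SIC residual $T_3$ is handled by substituting \eqref{Eqn:decod}: the decomposition $\eta_{nk}s_{ni}-\mathbb{E}\{\eta_{nk}\}\hat{s}_{ni}=(\rho_{ni}\eta_{nk}-\mathbb{E}\{\eta_{nk}\})\hat{s}_{ni}+\eta_{nk}e_{ni}$, combined with $\hat{s}_{ni}\perp e_{ni}$ and $\mathbb{E}\{|e_{ni}|^2\}=1-\rho_{ni}^2$, collapses to $\mathbb{E}\{|\eta_{nk}|^2\}+(1-2\rho_{ni})|\mathbb{E}\{\eta_{nk}\}|^2$ after expanding and cancelling the $\rho_{ni}^2$ contributions.

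The hard step will be keeping the cross-AP bookkeeping honest inside $\mathbb{E}\{|\eta_{nk}|^2\}$: one must be careful to distinguish the per-AP fourth-moment contribution (which brings the $L+1$ factor, and whose MMSE-error piece is $\beta_{mnk}-\theta_{mnk}$) from the off-diagonal product $L\sqrt{\theta_{mnk}\theta_{m'nk}}$, and to verify that together they telescope to $|\mathbb{E}\{\eta_{nk}\}|^2+\sum_m\beta_{mnk}$. Once that identity is in hand, assembling the denominator is straightforward: the $\beta_{mnk}$ coefficients coming from $T_1$, $T_2$, $T_3$ and $T_4$ combine into a single factor $\sum_m\beta_{mnk}$ multiplied by $\sum_i p_{ni}+\sum_{n'\neq n}p_{n'}=p_t$ (since the power-allocation coefficients sum to one), while the $|\mathbb{E}\{\eta_{nk}\}|^2=L(\sum_m\sqrt{\theta_{mnk}})^2$ coefficients from $T_2$ and $T_3$ aggregate to $\sum_{i=1}^{k-1}p_{ni}+2\sum_{i=k+1}^{K}(1-\rho_{ni})p_{ni}$. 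Adding the unit noise floor and dividing into the numerator $p_{nk}L(\sum_m\sqrt{\theta_{mnk}})^2$ gives the claimed closed form \eqref{Eqn:mrt_rate}.
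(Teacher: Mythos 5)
Your proposal is correct and follows essentially the same route as the paper's Appendix A: compute $\mathbb{E}\{\eta_{nk}\}$ from the MMSE decomposition, obtain $\mathbb{E}\{|\eta_{nk}|^2\}$ via AP-independence plus the Gaussian fourth-moment identity $\mathbb{E}\{\|\mathbf{v}\|^4\}=L(L+1)\sigma^4$, handle the inter-cluster term by independence of $\mathbf{h}_{mnk}$ and $\mathbf{w}_{mn'}$, and reduce the SIC residual to $\mathbb{E}\{|\eta_{nk}|^2\}+(1-2\rho_{ni})|\mathbb{E}\{\eta_{nk}\}|^2$. The only (immaterial) differences are that you write the precoder directly in terms of $\bar{\mathbf{h}}_{mn}$ rather than $\hat{\mathbf{h}}_{mnk}/\sqrt{L\theta_{mnk}}$, and you derive the SIC term through the explicit error decomposition \eqref{Eqn:decod} instead of directly using $\mathbb{E}\{s_{ni}^*\hat{s}_{ni}\}=\rho_{ni}$.
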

\begin{figure*}
\begin{equation}\label{Eqn:mrt_rate}
 \gamma_{nk}^{\rm{MRT}} = \frac{L p_{nk} \left(\sum \limits_{m=1}^M{\sqrt{\theta_{mnk}}}\right)^2}{ L \left(\sum \limits_{m=1}^M{\sqrt{\theta_{mnk}}}\right)^2 \left(\sum\limits_{i=1}^{k-1}{p_{ni}} + \sum\limits_{i=k+1}^K{p_{ni} (2-2 \rho_{ni})}\right)+\left(\sum\limits_{n'=1}^N{p_{n'}}\right) \sum\limits_{m=1}^M{\beta_{mnk}}+1}.
\end{equation}
\rule{\textwidth}{0.25mm}
\end{figure*}
\begin{proof}
See Appendix \ref{mrt_app} for derivations.
\end{proof}
Since   \eqref{Eqn:mrt_rate} connects several factors together, it can be used to   provide several  remarks and guidelines  relevant for practical  NOMA-based  CF mMIMO.  We  briefly describe them  next. 

\begin{rem}\label{REM1_MTR}
With  MRT precoding, the signal power increases as the number of antennas at each AP,  $L,$ increases, thanks to the array gain. On the other hand, the interference due to the pilot contamination and imperfect SIC (the first term in the denominator) proportionally increases as $L$ increases.
\end{rem}
\begin{rem}\label{REM2_MTR}
Beamforming gain uncertainty and inter-cluster interference (the second term in the denominator), are  not affected by $L$.
\end{rem}
\begin{rem}\label{REM3_MTR}
As opposed to the number of clusters $N$, which is restricted by the maximum orthogonal pilot sequence length ($ N\leq \tau$), the number of users $K$ within each cluster can be increased greatly.   However, the downside is the  increase of pilot contamination and imperfect SIC.
\end{rem}
  
\subsection{Full-pilot ZF beamforming}
In this case, the precoding vector at the $m$th AP for the users in the $n$th cluster can be formulated as 
\begin{equation}\label{Eqn:precode}
\mathbf{w}_{mn} = \frac{\bar{\mathbf{H}}_m \left(\bar{\mathbf{H}}_m^\mathrm{H} \bar{\mathbf{H}}_m\right)^{-1}\mathbf{e}_{n}}{\sqrt{\mathbb{E} \left\{ \left\lVert  \bar{\mathbf{H}}_m (\bar{\mathbf{H}}_m^\mathrm{H} \bar{\mathbf{H}}_m)^{-1}\mathbf{e}_{n} \right \rVert^2 \right\}}},  
\end{equation}
where each AP has $N$ precoding vectors, one per cluster (pilot).

In order to find  the effective SINR $\gamma_{nk}$ given in \eqref{Eqn:INTERFERENCE}, we first need to derive the value of $\eta_{nk}, \forall n$.  To obtain $\eta_{nk}, \forall n$, we need to compute $\hat{\mathbf{h}}_{mnk}^\mathrm{H}\mathbf{w}_{mn'}$. To this end, we first obtain the normalization term in \eqref{Eqn:precode}. By employing Lemma $2.10$ of \cite{tulino2004} and using the properties of $N \times N$ central complex Wishart matrix with $L$ ($L\geq N+1$) degrees of freedom, we obtain
\begin{align}\label{Eqn:INTERFERENCE2}
\nonumber &\mathbb{E} \left\{ \parallel \bar{\mathbf{H}}_m (\bar{\mathbf{H}}_m^\mathrm{H} \bar{\mathbf{H}}_m)^{-1}\mathbf{e}_{n}\parallel^2\right\} = \mathbb{E}\left\{(\bar{\mathbf{H}}_m^\mathrm{H} \bar{\mathbf{H}}_m)^{-1}_{n,n} \right\}\\
&= \frac{1}{(L-N)(1+ \tau p_p \sum_{k=1}^K{\beta_{mnk}})}.
\end{align}

Finally, by employing \eqref{Eqn:estimate}, \eqref{Eqn:precode} and \eqref{Eqn:INTERFERENCE2}, we have
\begin{align}\label{Eqn:INTER}
\nonumber \hat{\mathbf{h}}_{mnk}^\mathrm{H}\mathbf{w}_{mn'} &= c_{mnk} \mathbf{e}_{n}^\mathrm{H} \mathbf{e}_{n'} \sqrt{(L-N)\left(1+ \tau p_p \sum_{k=1}^K{\beta_{mnk}}\right)} \\
&= \left\{ \begin{array}{l}
\sqrt{\theta_{mnk}(L-N)},\IEEEeqnarraynumspace n = n'\\ 
0,  \!\!\!\!\IEEEeqnarraynumspace \IEEEeqnarraynumspace \IEEEeqnarraynumspace \IEEEeqnarraynumspace \IEEEeqnarraynumspace n \neq n'  \\ 
 \end{array} \right.
\end{align}

From \eqref{Eqn:INTER}, we find that fpZF precoder suppresses the inter-cluster interference by only utilizing local CSI rather than CSI  shared among the APs, a  big advantage. Moreover,   fpZF has the same front-hauling overhead as  MRT.
\begin{thm}
In the CF massive MIMO-NOMA system with fpZF precoder, $\gamma_{nk}$ in \eqref{Eqn:INTERFERENCE} , for any finite $M, L ,N$ and $K$, is given by \eqref{Eqn:rat2} as shown at the top of the next page.
\end{thm}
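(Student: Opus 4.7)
The plan is to evaluate each expectation in \eqref{Eqn:INTERFERENCE} by exploiting the key fpZF identity \eqref{Eqn:INTER}, which makes $\hat{\mathbf{h}}_{mnk}^\mathrm{H}\mathbf{w}_{mn'}$ a deterministic quantity equal to $\sqrt{\theta_{mnk}(L-N)}$ when $n=n'$ and equal to $0$ otherwise. To leverage this, I would decompose the true channel as $\mathbf{h}_{mnk}=\hat{\mathbf{h}}_{mnk}+\bm{\epsilon}_{mnk}$, with $\bm{\epsilon}_{mnk}\sim\mathcal{CN}(\mathbf{0},(\beta_{mnk}-\theta_{mnk})\mathbf{I}_L)$ independent of every precoder $\mathbf{w}_{mn'}$ (which is a function of $\bar{\mathbf{H}}_m$ only and hence of the estimates), and I would use $\mathbb{E}\{\|\mathbf{w}_{mn}\|^2\}=1$, guaranteed by the normalization in \eqref{Eqn:precode} together with \eqref{Eqn:INTERFERENCE2}.

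First I would compute $\mathbb{E}\{\eta_{nk}\}=\sqrt{L-N}\sum_{m}\sqrt{\theta_{mnk}}$, immediately yielding the numerator $p_{nk}(L-N)\big(\sum_{m}\sqrt{\theta_{mnk}}\big)^{2}$. The beamforming-gain uncertainty collapses to $\mathrm{Var}(\eta_{nk})=\sum_{m}\mathbb{E}\{|\bm{\epsilon}_{mnk}^{\mathrm{H}}\mathbf{w}_{mn}|^{2}\}=\sum_{m}(\beta_{mnk}-\theta_{mnk})$ by independence across APs and the unit expected norm of $\mathbf{w}_{mn}$. The same argument applied to $\mathbf{w}_{mn'}$ with $n'\neq n$ cancels the estimated-channel contribution and gives the per-cluster inter-cluster interference $\sum_{m}(\beta_{mnk}-\theta_{mnk})$. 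For the intra-cluster interference after SIC I would use $\mathbb{E}\{|\eta_{nk}|^{2}\}=|\mathbb{E}\{\eta_{nk}\}|^{2}+\mathrm{Var}(\eta_{nk})$.

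The main obstacle is the residual-SIC term $\mathbb{E}\{|\eta_{nk}s_{ni}-\mathbb{E}\{\eta_{nk}\}\hat{s}_{ni}|^{2}\}$: I would substitute \eqref{Eqn:decod} to rewrite it as $(\rho_{ni}\eta_{nk}-\mathbb{E}\{\eta_{nk}\})\hat{s}_{ni}+\eta_{nk}e_{ni}$, then use the independence of $\hat{s}_{ni}$ and $e_{ni}$ together with $\mathbb{E}\{|e_{ni}|^{2}\}=1-\rho_{ni}^{2}$, which I expect to simplify (after cancellation) to $\mathrm{Var}(\eta_{nk})+2(1-\rho_{ni})|\mathbb{E}\{\eta_{nk}\}|^{2}$. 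Getting this algebraic collapse right, without losing or double-counting cross-terms, is the delicate step. Finally I would assemble the denominator: collecting the $(L-N)\big(\sum_{m}\sqrt{\theta_{mnk}}\big)^{2}$ contributions yields the coefficient $\sum_{i=1}^{k-1}p_{ni}+2\sum_{i=k+1}^{K}(1-\rho_{ni})p_{ni}$, while the $\sum_{m}(\beta_{mnk}-\theta_{mnk})$ contributions from the uncertainty, intra-cluster, residual-SIC and inter-cluster pieces telescope via $p_{nk}+\sum_{i\neq k}p_{ni}+\sum_{n'\neq n}p_{n'}=\sum_{n'=1}^{N}p_{n'}$ to produce the compact form claimed in \eqref{Eqn:rat2}.
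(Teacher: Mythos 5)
Your proposal is correct and follows essentially the same route as the paper's Appendix B: it exploits the determinism of $\hat{\mathbf{h}}_{mnk}^\mathrm{H}\mathbf{w}_{mn'}$ from \eqref{Eqn:INTER}, the decomposition $\mathbf{h}_{mnk}=\hat{\mathbf{h}}_{mnk}+\bm{\epsilon}_{mnk}$ with the unit-norm precoder, and the identity $\mathbb{E}\{|\eta_{nk}s_{ni}-\mathbb{E}\{\eta_{nk}\}\hat s_{ni}|^2\}=\mathbb{E}\{|\eta_{nk}|^2\}+(1-2\rho_{ni})\mathbb{E}^2\{\eta_{nk}\}$ (which your $\mathrm{Var}(\eta_{nk})+(2-2\rho_{ni})|\mathbb{E}\{\eta_{nk}\}|^2$ form reproduces exactly), and your final collection of the $\sum_m(\beta_{mnk}-\theta_{mnk})$ terms into $\sum_{n'=1}^{N}p_{n'}$ matches \eqref{Eqn:rat2}.
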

\begin{figure*}
\begin{equation}\label{Eqn:rat2}
\gamma_{nk}^{{\rm{fpZF}}} = \frac{\left(L-N\right)p_{nk} \left(\sum \limits_{m=1}^M{\sqrt{\theta_{mnk}}}\right)^2 }{(L-N)\left(\sum \limits_{m=1}^M{\sqrt{\theta_{mnk}}}\right)^2 \left(\sum\limits_{i=1}^{k-1}{p_{ni}} + \sum\limits_{i=k+1}^K{p_{ni}(2-2 \rho_{ni})}\right)+\left(\sum\limits_{n'=1}^N{p_{n'}}\right) \sum\limits_{m=1}^M{\left(\beta_{mnk}-\theta_{mnk}\right)}+1}.
\end{equation}
\rule{\textwidth}{0.25mm}
\end{figure*}
\begin{proof}
See Appendix \ref{zf_app} for the derivations.
\end{proof}
\begin{rem}
Similar to MRT, by employing fpZF precoding, the signal power increases as the number of antennas at each AP $L$ increases, thanks to the array gain (which is $L-N$). On the other hand, the interference due to the pilot contamination and imperfect SIC (the first term in the denominator) proportionally increases as $L$ increases.
\end{rem}
\begin{rem}
Unlike MRT which only aims to maximize the signal-to-noise ratio (SNR) and ignores the inter-cluster interference, fpZF aims to suppress the inter-cluster interference by sacrificing  array gain.
\end{rem}

\begin{rem}\label{remark}
For a fixed number of clusters $N$, if the number of antennas at each AP tends to infinity $(L \to \infty )$, the SINR for both the MRT and fpZF precoding \eqref{Eqn:mrt_rate} and \eqref{Eqn:rat2} can be approximated as
\begin{equation}\label{converge}
\small
\mathop{\lim }\limits_{L \to \infty } \gamma_{nk}=\frac{p_{nk} \left(\sum \limits_{m=1}^M{\sqrt{\theta_{mnk}}}\right)^2 }{\left(\sum \limits_{m=1}^M{\sqrt{\theta_{mnk}}}\right)^2 \left(\sum\limits_{i=1}^{k-1}{p_{ni}} + \sum\limits_{i=k+1}^K{p_{ni}(2-2 \rho_{ni})}\right)}. 
\end{equation}
which shows that, the gain of adding more antennas at each AP disappears. Besides, for fixed  values of $L$ and $N$, by increasing the number of users at each cluster, pilot contamination and SIC become the dominant interferences. 
Therefore, the performance of the CF massive MIMO-NOMA system is limited by pilot contamination and imperfect SIC.

In contrast, in  OMA, where each user is assigned with an orthogonal pilot,  SINR  $\gamma_{nk}$ for  MRT and fpZF is given as
\begin{equation}\label{Eqn:mrt_rate_oma}
 \gamma_{nk,\rm{OMA}}^{\rm{MRT}} = \frac{L p_{nk} \left(\sum \limits_{m=1}^M{\sqrt{\theta_{mnk}}}\right)^2}{\left(\sum\limits_{n'=1}^N{\sum\limits_{k'=1}^K{p_{n'k'}}}\right)\sum\limits_{m=1}^M{\beta_{mnk}}+1} 
\end{equation}
and 
\begin{equation}\label{Eqn:ZF_rate_oma}
 \gamma_{nk,\rm{OMA}}^{\rm{fpZF}} = \frac{\left(L-KN\right)p_{nk} \left(\sum \limits_{m=1}^M{\sqrt{\theta_{mnk}}}\right)^2 }{\left(\sum\limits_{n'=1}^N{\sum\limits_{k'=1}^K{p_{n'k'}}}\right)\sum\limits_{m=1}^M{\left(\beta_{mnk}-\theta_{mnk}\right)}+1}.
\end{equation}
Therefore, deploying more antennas at each AP is always beneficial for OMA.
\end{rem}

\subsection{Modified  RZF beamforming}
The mRZF precoder  tries to balance inter-cluster interference mitigation and intra-cluster power enhancement. Since closed-form analysis of it is  difficult (if not impossible),  we analyze its  achievable rate in the asymptotic regime, where the number of clusters $N$ and the number of antennas at each AP $L$ grows infinitely large while keeping a finite ratio; i.e.,
$ 1\le \text{lim}_{L,N\to \infty} \frac L N \le \infty$\footnote{This assumption implies that the coherence time of the channel $\tau_c$ scales linearly with $N$. However, as it will be shown in Section IV, our analysis provides a tight approximation for the achievable rate even for small values of $N$ or equivalently $\tau_c$.}. This asymptotic expression can nevertheless  be used with  finite values of $L$ and $N$ \cite{Wagner2012, Huh2012}. 

The precoding vector of mRZF at the $m$th AP for the users in the $n$th cluster can be expressed as
\begin{equation}\label{Eqn:precodeRZF}
\mathbf{w}_{mn} = \frac{\left(\bar{\mathbf{H}}_m \bar{\mathbf{H}}_m^\mathrm{H} + L\alpha \mathbf{I}_L\right)^{-1}\bar{\mathbf{h}}_{mn}}{\sqrt{\mathbb{E} \left\{ \left\lVert  (\bar{\mathbf{H}}_m \bar{\mathbf{H}}_m^\mathrm{H} + L\alpha \mathbf{I}_L)^{-1}\bar{\mathbf{h}}_{mn} \right \rVert^2 \right\}}},
\end{equation}
where $\bar{\mathbf{H}}_m$ is given in \eqref{Eqn:h_bar} and $\alpha >0$ is the regularization parameter that can be optimized \cite{Hoydis2013}. Finding optimal value for $\alpha$, however, is outside the scope of this paper and is left for future work. Here, we assume that $\alpha$ is scaled with $L$ to ensure that it converges to a constant value as $L$ and $N$ tend to infinity. 

It can be seen from \eqref{Eqn:precodeRZF}, in our design the precoder of each AP only utilizes local CSI rather than global CSI knowledge shared between the APs. Hence, mRZF has the same front-hauling overhead as that of MRT and fpZF.
\begin{thm}
In the CF massive MIMO-NOMA system with mRZF precoder, $\gamma_{nk}$ in \eqref{Eqn:INTERFERENCE} , when $L$ and $N$ grow large such that $ 1\le \text{lim}_{L,N\to \infty} \frac{L}{N} \le \infty$, , is given by
\eqref{Eqn:RZF_rate} as shown at the top of the next page, where $\mathbf{\Theta}_{mn} = {(1 +\tau p_p \sum\limits_{k=1}^k{\beta_{mnk}})}\mathbf{I}_L $, $a_{mnk}= \sqrt{\frac{\beta_{mnk}-\theta_{mnk}}{1+\tau p_p \sum_{k=1}^K{\beta_{mnk}}}}$ and $e_{mn}^o = e_{mn}$ in which
\begin{figure*}
\begin{equation}\label{Eqn:RZF_rate}
 \gamma_{nk}^{{\rm{mRZF}}} = \frac{ p_{nk}\left(\sum\limits_{m= 1}^M{  \frac{1}{\sqrt{\psi_{mn}^o}}\frac{c_{mnk} e_{mn}^o}{1 + e_{mn}^o}} \right)^2 }{ \left(\sum\limits_{m= 1}^M{  \frac{1}{\sqrt{\psi_{mn}^o}}\frac{c_{mnk} e_{mn}^o}{1 + e_{mn}^o}} \right)^2  \left(\sum\limits_{i=1}^{k-1}{p_{ni}} + \sum\limits_{i=k+1}^K{p_{ni}(2-2 \rho_{ni})}\right)+\sum\limits_{m = 1}^M{ \Upsilon_{mn}\left(\frac{c_{mnk}^2}{(1+e_{mn}^o)^2}+ a_{mnk}^2\right)}+1}.
\end{equation}
\rule{\textwidth}{0.25mm}
\end{figure*}

\begin{align}\label{TT1}
e_{mn} = \frac{1}{L} {\rm{tr}} \left[  \mathbf{\Theta}_{mn} \mathbf{T}_{m} \right] 
\end{align}
\begin{align}\label{TT2}
\mathbf{T}_{m} = \left( \frac{1}{L}\sum\limits_{j=1}^N {\frac{\mathbf{\Theta}_{mj}}{1+ e_{mj}}}+ \alpha \mathbf{I}_L\right)^{-1}
\end{align}
\begin{align}\label{TT3}
\psi_{mn}^o = \frac{1}{L} {\frac{ e'_{mn}}{(1+e_{mn})^2 }}
\end{align}
\begin{align}\label{TT4}
\Upsilon_{mn} = \frac{1}{L}\sum\limits_{\mystack{n'=1}{n' \ne 
n}}^N{\frac{p_{n'}e'_{n',mn}}{{\psi_{mn'}^o(1+e_{mn'})^2}}} 
\end{align}
in which $ \mathbf{e}'_m = [e'_{m1}, \ldots,e'_{mN}]^\mathrm{T}$ and  $\mathbf{e}'_{mn} = [e'_{1,mn}, \ldots,e'_{N,mn} ]^\mathrm{T}
$ are given by
\begin{align}
\mathbf{e}'_m &= \left( \mathbf{I}_N - \mathbf{J}_m\right)^{-1} \mathbf{v}_m,\\
\mathbf{e}'_{mn} &= \left( \mathbf{I}_N - \mathbf{J}_m\right)^{-1} \mathbf{v}_{mn},
\end{align}
and $\mathbf{J}_m$, $\mathbf{v}_m$ and $\mathbf{v}_{mn}$ are derived as follows,
\small{
\begin{align}\label{Eqn:ej1}
&[\mathbf{J}_m]_{ij} = \frac{\frac{1}{L}{\rm{tr}}\left[ \mathbf{\Theta}_{mi}\mathbf{T}_m\mathbf{\Theta}_{mj}\mathbf{T}_m\right]}{L(1+e_{mj})^2}\\
 &\mathbf{v}_m= \left[\frac{1}{L}{\rm{tr}}\left[  \mathbf{\Theta}_{m1}\mathbf{T}_m^2\right], \ldots, \frac{1}{L}{\rm{tr}} \left[ \mathbf{\Theta}_{mN}\mathbf{T}_m^2\right]  \right]^\mathrm{T}.\\
&\mathbf{v}_{mn} = \left[\frac{1}{L}{\rm{tr}}\left[ \mathbf{\Theta}_{m1}\mathbf{T}_m\mathbf{\Theta}_{mn}\mathbf{T}_m\right], \ldots, \frac{1}{L}{\rm{tr}}\left[ \mathbf{\Theta}_{mN}\mathbf{T}_m\mathbf{\Theta}_{mn}\mathbf{T}_m\right] \right]^\mathrm{T}.
\end{align}}
Besides, the initial values of $e_{mn}; \forall n$ to calculate \eqref{TT1} and \eqref{TT2} are  set to $e_{mn}^0 = \frac{1}{\alpha}, n=1,\ldots,N$ \cite{Wagner2012}.
\end{thm}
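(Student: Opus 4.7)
The proof is an exercise in large-system / deterministic-equivalent analysis in the spirit of \cite{Wagner2012, Hoydis2013}, applied term-by-term to the SINR expression \eqref{Eqn:INTERFERENCE}. My plan is to write each of the five constituent expectations in the numerator and denominator of $\gamma_{nk}$ in closed form for the mRZF precoder, and then replace the random quadratic forms that appear by their deterministic equivalents in the regime $L,N\to\infty$ with $L/N$ bounded. The building block throughout is the Sherman--Morrison identity applied to $\mathbf{B}_m \triangleq \bar{\mathbf{H}}_m\bar{\mathbf{H}}_m^{\mathrm H}+L\alpha\mathbf{I}_L$, namely
\begin{equation*}
\mathbf{B}_m^{-1}\bar{\mathbf{h}}_{mn}=\frac{\mathbf{B}_{m,n}^{-1}\bar{\mathbf{h}}_{mn}}{1+\bar{\mathbf{h}}_{mn}^{\mathrm H}\mathbf{B}_{m,n}^{-1}\bar{\mathbf{h}}_{mn}},
\end{equation*}
with $\mathbf{B}_{m,n}=\mathbf{B}_m-\bar{\mathbf{h}}_{mn}\bar{\mathbf{h}}_{mn}^{\mathrm H}$ independent of $\bar{\mathbf{h}}_{mn}$. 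This is the standard trick that restores the independence required to apply the trace lemma to quadratic forms.

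First I would handle the normalization and the desired-signal numerator. Using $\hat{\mathbf{h}}_{mnk}=c_{mnk}\bar{\mathbf{h}}_{mn}$ and the independence of the estimation error $\bm{\epsilon}_{mnk}$, one gets $\mathbb{E}\{\mathbf{h}_{mnk}^{\mathrm H}\mathbf{w}_{mn}\}=c_{mnk}\,\mathbb{E}\{\bar{\mathbf{h}}_{mn}^{\mathrm H}\mathbf{B}_m^{-1}\bar{\mathbf{h}}_{mn}\}/\sqrt{\mathbb{E}\{\|\mathbf{B}_m^{-1}\bar{\mathbf{h}}_{mn}\|^{2}\}}$. After Sherman--Morrison, the trace lemma gives $\frac{1}{L}\bar{\mathbf{h}}_{mn}^{\mathrm H}\mathbf{B}_{m,n}^{-1}\bar{\mathbf{h}}_{mn}\asymp \frac{1}{L}\mathrm{tr}(\mathbf{\Theta}_{mn}\mathbf{B}_{m,n}^{-1})\asymp e_{mn}$, where $e_{mn}$ is the unique positive solution of the coupled fixed-point system \eqref{TT1}--\eqref{TT2}; the resolvent $\mathbf{B}_{m,n}^{-1}$ can be replaced by $\mathbf{B}_m^{-1}$ at no cost in the limit. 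For the normalization I would differentiate the Stieltjes-type quantities or, equivalently, apply Theorem~2 of \cite{Wagner2012} to identify the limit of $\frac{1}{L^2}\bar{\mathbf{h}}_{mn}^{\mathrm H}\mathbf{B}_m^{-2}\bar{\mathbf{h}}_{mn}$ as $\psi_{mn}^{o}=\frac{e'_{mn}}{L(1+e_{mn})^2}$ with $e'_{mn}$ defined through \eqref{Eqn:ej1}. Together with the $(1+e_{mn})^{-1}$ factor coming from Sherman--Morrison, this delivers the $\sum_m\frac{1}{\sqrt{\psi_{mn}^{o}}}\frac{c_{mnk}e_{mn}^{o}}{1+e_{mn}^{o}}$ appearing in \eqref{Eqn:RZF_rate}.

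Next I would handle the interference terms. The intra-cluster-after-SIC and residual-SIC terms $T_2,T_3$ reduce, after taking expectations over the independent Gaussian symbols and using $\mathbb{E}\{|\eta_{nk}|^2\}=|\mathbb{E}\{\eta_{nk}\}|^2+\mathrm{Var}(\eta_{nk})$ together with the identity $\mathbb{E}\{|\eta_{nk}s_{ni}-\mathbb{E}\{\eta_{nk}\}\hat{s}_{ni}|^{2}\}=\mathbb{E}\{|\eta_{nk}|^{2}\}+|\mathbb{E}\{\eta_{nk}\}|^{2}(1-2\rho_{ni})$, to the familiar combination $\bigl(\sum p_{ni}+\sum p_{ni}(2-2\rho_{ni})\bigr)$ multiplying $|\mathbb{E}\{\eta_{nk}\}|^{2}$, exactly mirroring the MRT and fpZF structure; this reuses the already-computed numerator quantity and accounts for the critical fact that, within the same cluster, $\hat{\mathbf{h}}_{mnk}$ is parallel to $\bar{\mathbf{h}}_{mn}$ by \eqref{parallel}. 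For the inter-cluster term $T_4$, I would split $\mathbf{h}_{mnk}=\hat{\mathbf{h}}_{mnk}+\bm{\epsilon}_{mnk}$; the estimation-error part contributes the isotropic term $a_{mnk}^{2}\Upsilon_{mn}$ via the trace of $\mathbf{w}_{mn'}\mathbf{w}_{mn'}^{\mathrm H}$, while the correlated part $c_{mnk}\bar{\mathbf{h}}_{mn}^{\mathrm H}\mathbf{B}_m^{-1}\bar{\mathbf{h}}_{mn'}$ is again tackled by Sherman--Morrison (now removing both columns $n$ and $n'$) and by invoking the second-order deterministic equivalents of \cite{Wagner2012, Hoydis2013}, yielding $\mathbb{E}\{|\bar{\mathbf{h}}_{mn}^{\mathrm H}\mathbf{B}_m^{-1}\bar{\mathbf{h}}_{mn'}|^{2}\}\asymp e'_{n',mn}/(1+e_{mn})^{2}(1+e_{mn'})^{2}$, which is precisely the quantity building up $\Upsilon_{mn}$ through \eqref{TT4}.

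The main obstacle I anticipate is the bookkeeping of the two distinct families of derivative-type quantities, $e'_{mn}$ and $e'_{n',mn}$, together with their fixed-point representations \eqref{Eqn:ej1}; the normalization $\psi_{mn}^{o}$ requires the diagonal family while the inter-cluster interference forces the off-diagonal family, and both emerge from differentiating the same Stieltjes-transform master equation with respect to different spectral variables. Establishing the a.s.\ convergence rigorously requires justifying the exchange of the $M$-sum (fixed) with the asymptotic limit and controlling the boundary case where two cluster indices coincide; once these technicalities are in place, substituting the deterministic equivalents into \eqref{Eqn:INTERFERENCE} yields \eqref{Eqn:RZF_rate} directly. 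The continuous-mapping theorem then transfers the convergence from the SINR to the achievable rate in \eqref{Eqn:rate}.
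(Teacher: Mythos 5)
Your proposal is correct and follows essentially the same route as the paper's Appendix C: Sherman--Morrison on the regularized Gram matrix, the trace lemma for quadratic forms, the fixed-point deterministic equivalents $e_{mn}$, $e'_{mn}$, $e'_{n',mn}$ from Theorems 1--2 of Wagner et al.\ for the desired power and the normalization $\psi_{mn}^o$, the vanishing of the beamforming-gain-uncertainty variance, and the split $\mathbf{h}_{mnk}=c_{mnk}\bar{\mathbf{h}}_{mn}+\bm{\epsilon}_{mnk}$ for the inter-cluster term. The only cosmetic difference is that the paper aggregates all $n'\neq n$ contributions into a single quadratic form $\mathbf{h}_{mnk}^{\mathrm H}\mathbf{\Sigma}_m^{-1}\bar{\mathbf{H}}_{mn}\mathbf{P}_{mn}\bar{\mathbf{H}}_{mn}^{\mathrm H}\mathbf{\Sigma}_m^{-1}\mathbf{h}_{mnk}$ and applies Lemmas 6--7 of Wagner et al.\ once, whereas you peel off each cross term $|\bar{\mathbf{h}}_{mn}^{\mathrm H}\mathbf{B}_m^{-1}\bar{\mathbf{h}}_{mn'}|^2$ separately; both lead to the same $\Upsilon_{mn}$.
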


\begin{proof}
See Appendix \ref{RZF_app} for derivations.
\end{proof}
 \begin{rem}
In order to reduce the amount of overhead exchanged over the fronthaul network, channel estimation is performed locally at each AP and MRT,  fpZF  and mRZF precoders are designed based on the local CSI at each AP \eqref{Eqn:h_bar} . In this case, pilot signals are not shared over the fronthaul link. Indeed, only channel statistics (large scale parameters which changes slowly \cite{Ngo2017}) are sent to the CPU for user ordering and power allocation process. Thus, assuming a given system parameters $M$, $L$, $N$, and $K$, all three precoders have the same front-hauling overhead. In particular, for each realization of the user locations, the number of $MNK$ statistical parameters need to be exchanged between the CPU and the APs.
 \end{rem}
\section{Simulation Results}\label{sim}
Herein, we  provide  simulation results  to  evaluate the performance of the CF massive MIMO with NOMA or OMA. In NOMA, the same pilot sequence is shared among users within each cluster, but    different clusters are  assigned mutually orthogonal pilots. In  OMA, all  the  users  are  assigned  with  different  mutually  orthogonal  pilots. Therefore, the minimum pilot sequence lengths for NOMA and OMA are $\tau_{\text{NOMA}} = N$ and $\tau_{\text{OMA}} = KN$, respectively. The pre-log factor for NOMA and OMA cases are also defined as $\zeta_{\text{NOMA}}=(\tau_c-N)/\tau_c$ and $\zeta_{\text{OMA}}= (\tau_c-KN)/\tau_c$, respectively. 

In our simulations, the APs are uniformly distributed within an area of size $D \times D \quad {\rm{m}}^2$. Furthermore, users are clustered based on their spatial locations and all the clusters are uniformly distributed at random in the given area (the users in the same cluster are also distributed uniformly at random around the center point of the cluster).
\subsection{Large-Scale Fading  Model}
Here, we assume that the large-scale fading coefficient $\beta_{mnk}$ in \eqref{channel} includes both the  path-loss effect and shadowing. Thus, $\beta_{mnk}$ can be wrtten as \cite{Ngo2017}
\begin{equation}
\beta_{mnk} = {\rm{PL}}_{mnk} + z_{mnk}  \quad {\rm{dB}}, 
\end{equation}
where ${\rm{PL}}_{mnk}$ is the path loss,  $z_{mnk}$ represents the shadow fading generated from  a log-normal distribution with standard deviation $\sigma_{{\rm{sh}}}$. A three-slope model is considered for the path loss \cite{Tang2001}, the path loss exponent equals (i) $3.5$
if the distance between the $m$th AP and the $k$th user in the $n$th cluster ($d_{mnk}$) is greater than $d_1$, (ii) equals $2$ if $ d_0 <d_{mnk} \le d_1$, and (iii) equals $0$ if $d_{mnk} \le d_0$ for some $d_0$ and $d_1$. When $d_{mnk} > d_1$, the Hata-COSTA$231$ propagation model is employed \cite{Ngo2017}. Therefore, the path loss can be written as
{\small{\begin{equation}\label{pathloss}
 {\rm PL}_{mnk} =
\begin{cases}
 &-\mathcal{L}-35 {\rm{log}}_{10}(d_{mnk}),\qquad\qquad\quad\\
 &\qquad\qquad\qquad\qquad\qquad\qquad\quad\;\;\;\;d_{mnk} > d_1  \\ 
 &-\mathcal{L}-15 {\rm{log}}_{10}(d_1) - 20 {\rm{log}}_{10}(d_{mnk}),\\
&\qquad\qquad\qquad\qquad\qquad\quad\;\;\;\; d_0 <d_{mnk} \le d_1 \\ 
& -\mathcal{L}-15 {\rm{log}}_{10}(d_1)- 20 {\rm{log}}_{10}(d_{0}),\\
&\qquad\qquad\qquad\qquad\qquad\qquad\qquad\; d_{mnk} \le d_0 \\ 
\end{cases} 
\end{equation}}}
where
\begin{align}\label{constant}
\nonumber \mathcal{L} &\triangleq 46.3  +  33.9 {\rm{log}}_{10}(f) - 13.82{\rm{log}}_{10}(h_{AP}) \\&- (1.1 {\rm{log}}_{10}(f) - 0.7)h_u +(1.56 {\rm{log}}_{10}(f)-0.8).
\end{align}
In \eqref{constant}, $f$ is the carrier frequency (in MHz). Also, $h_{\rm AP}$ and $h_u$ are the AP antenna and user antenna heights (in m), respectively. Note  that when  $d_{mnk} \le d_1$, there is no shadowing.
\subsection{Parameters and Setup}
The simulation parameters are reported in Table I.  The noise variance is given by 
$\sigma^2_{w} = 290 \times k_b \times$ bandwidth $\times$ noise figure, where $k_b$ is the  Boltzmann constant. To each cluster, the $m$th AP  allocates   power  $\frac{p_{t}}{N}$. We further assume that $K =2$; i.e., each cluster has two users, and the total power allocated for the $n$th cluster $p_n$ is divided between the  two users within a cluster based on a $3:7$ ratio ($\forall n$) implying that $\lambda_{n1} = 0.3\lambda_{n}, \lambda_{n2} = 0.7\lambda_{n}$. However, these power coefficients are not necessarily optimal and can be  further optimized.   
\begin{table}\label{tab}
\caption{Simulation Settings} 
\centering 
\begin{tabular}{c c c c} 
\hline\hline 
Parameter & Value & Parameter & Value \\ [0.5ex]  
\hline 
Carrier frequency & 1.9 GHz & $\tau_c$ & 56 \\ 
Bandwidth & $20$ MHz & $p_{p}$ & 20 dBm\\
Noise figure & $9$ dB & $p_{t}$ & 23 dBm  \\
$D,d_1,d_0$ & $1000,50,10$ m & $\sigma_{sh}$ & $8$ dB \\
$h_{\rm{AP}}, h_{\rm{u}}$ & $65,15$ m  & $\rho_{ni}$ &$0.1$  \\ [1ex] 
\hline 
\end{tabular}
\label{table:nonlin} 
\end{table}
\subsection{Performance Evaluation}
Here, we first compare the performance of the CF massive MIMO-NOMA  and  OMA systems for  MRT, fpZF and mRZF precoding in terms of  sum rate given by \cite{Hanif2016} 
\begin{equation}\label{Eqn:achievble sum rate}
\mathcal{R} = \sum\limits_{n=1}^{N}{\sum\limits_{k=1}^K{\mathcal{R}_{nk}}}.
\end{equation}

\begin{figure}
\centering
\includegraphics[width=8.8cm, height=7.5cm]{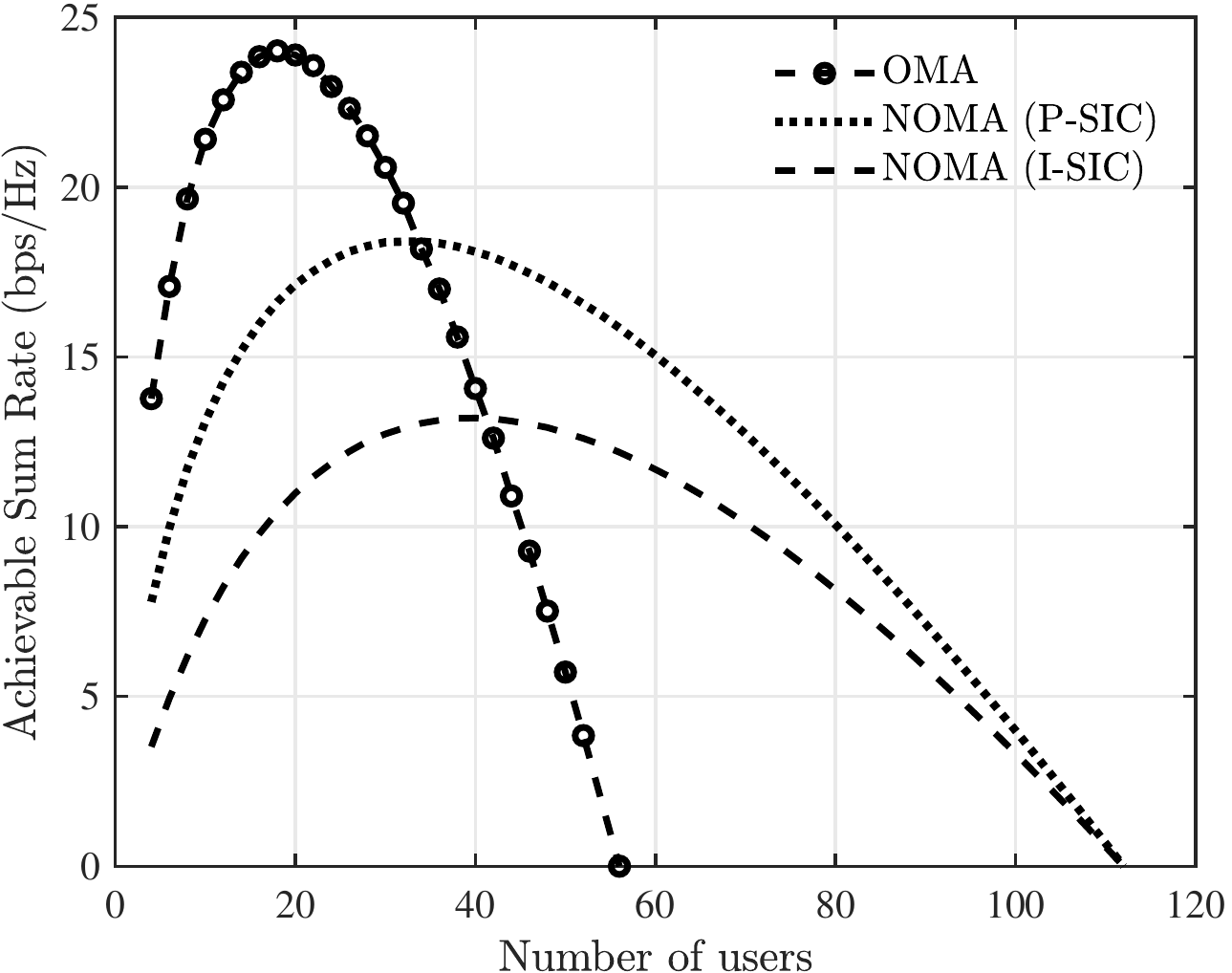}
\caption{The achievable sum rate versus the number of users for 25 APs (M = 25) and eight antennas per AP (L=8 with MRT precoding). P-SIC and I-SIC stand for  perfect SIC and imperfect SIC.}
\label{mrt_noma_oma} 
\end{figure}

Fig.~\ref{mrt_noma_oma} demonstrates the achievable sum rate \eqref{Eqn:achievble sum rate} of NOMA and OMA systems with MRT precoding as a function of the  number of users.  We set  $\tau_c = 56$.  We observe  that the maximum number of users  simultaneously served in OMA in the same resource block is $K_{\text{max}}^{{\rm{OMA}}} = 56$. In contrast,  NOMA doubles this. This is due to the fact that in OMA each user is assigned with an orthogonal pilot while in NOMA same pilot sequence is shared among users within each cluster. For a large number of users, NOMA outperforms OMA; however, with fewer  users, NOMA achieves  a  lower sum rate  than OMA due to the effects of intra-cluster pilot contamination and imperfect SIC. We also observe that, SIC imperfection  considerably degrades the performance of NOMA; for instance, compared to the perfect SIC, the residual interference caused by imperfect SIC degrades the achievable sum rate by $4.9 $ bps/Hz for $40$ simultaneously served users. 

In Fig.~\ref{fig:mrt_noma_oma_multiple_antenna}, the impact of deploying more antennas at the APs is investigated. As expected, adding more antennas at the APs  results in better sum rate, thanks to the array gain. However, despite OMA, the gain of adding more antennas diminishes for NOMA system since the interference due to the pilot contamination and imperfect SIC proportionally increases as $L$ increases. This observation confirms Remarks \ref{REM1_MTR} and \ref{remark}.

 \begin{figure}
 \centering
 \includegraphics[width=8.8cm, height=7.5cm]{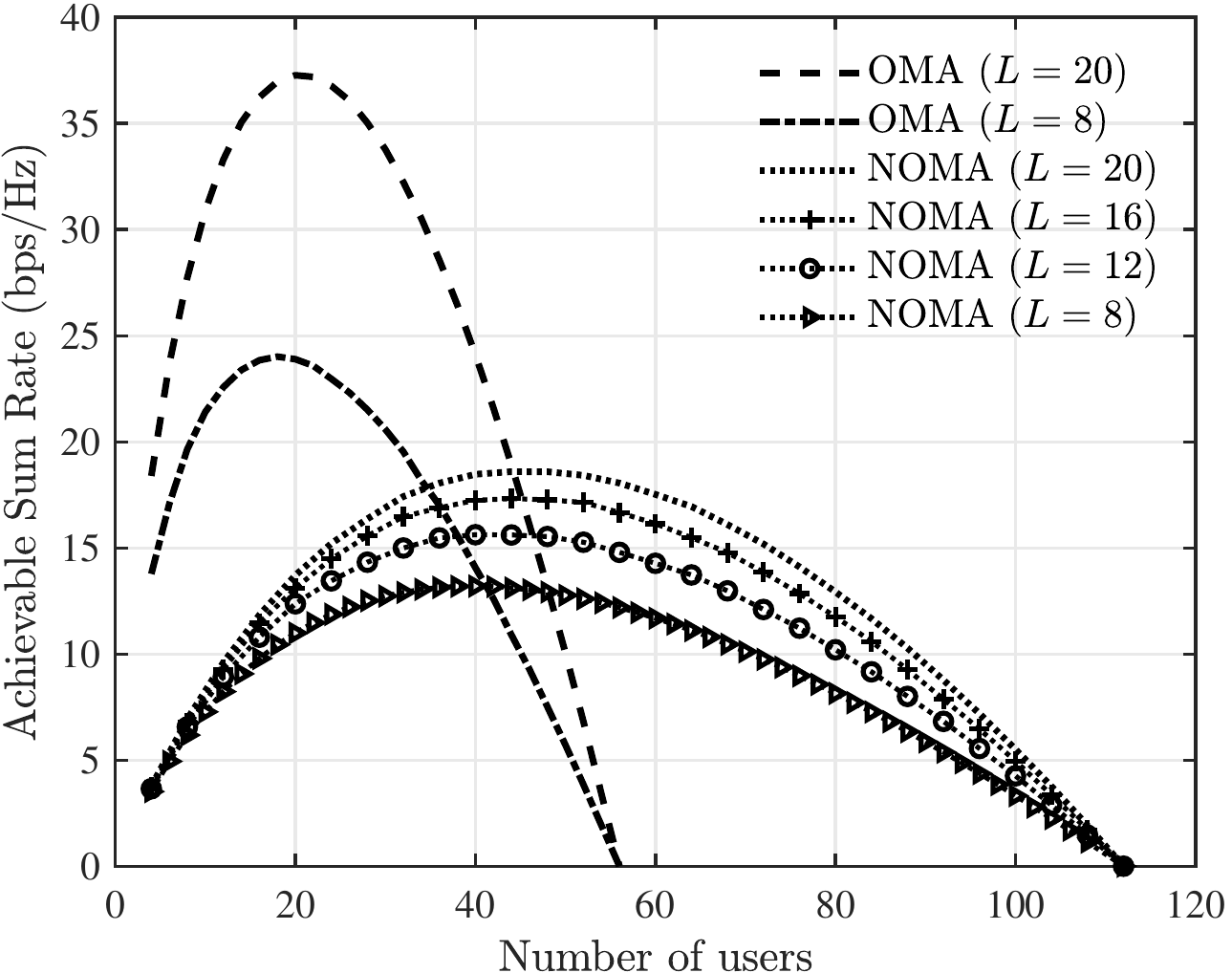}
\caption{The achievable sum rate versus the number of users for 25 APs  ($M = 25$, MRT, and  imperfect SIC).}
\label{fig:mrt_noma_oma_multiple_antenna} 
\end{figure}
Fig.~\ref{fig:ZF_noma_oma} shows the achievable sum rate of NOMA and OMA systems as a function of the  number of users. Here,  fpZF precoding and at  least  $ N+1\le L$ antennas \eqref{Eqn:rat2} per AP are considered. We see that  the residual interference caused by imperfect SIC significantly degrades the achievable sum rate. More specifically, for 40 users and 60 antennas per AP, the gap between I-SIC and P-SIC curves is about 53 bps/Hz.  For  imperfect SIC, the gain due  to antennas at APs is almost negligible.  This is because,  for a  low number of users (clusters), the SINR converges to \eqref{converge}, where imperfect  SIC  is the dominant term. Furthermore, as the number of users increases, both the desired power and interference power due to imperfect  SIC and pilot contamination increase proportionally to $L-N$. 
 \begin{figure}
 \centering
 \includegraphics[width=8.8cm, height=7.5cm]{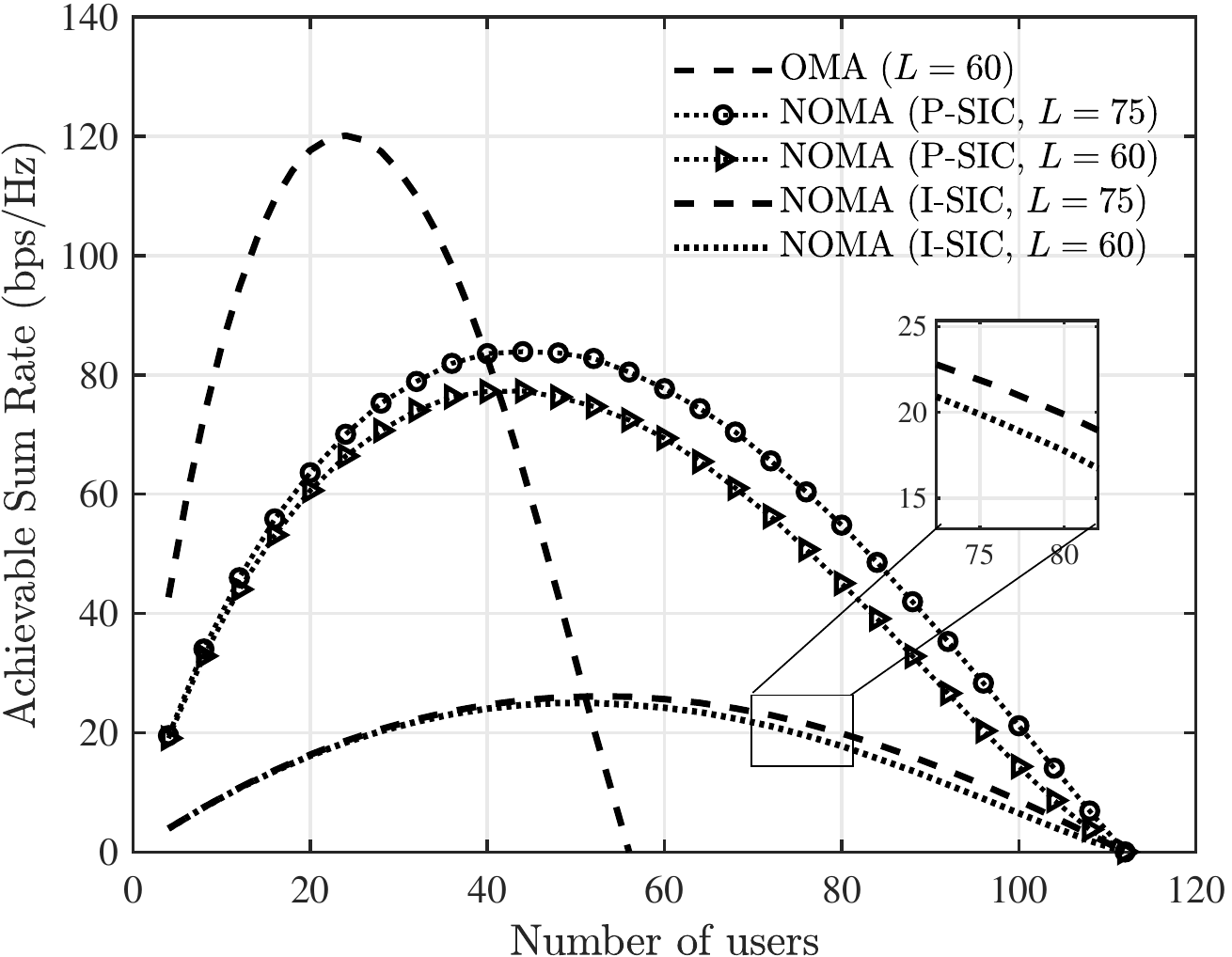}
\caption{The achievable sum rate  as a function of  the number of users for  25 APs (fpZF).}
\label{fig:ZF_noma_oma} 
\end{figure}

\begin{figure}
 \centering
 \includegraphics[width=8.8cm, height=7.5cm]{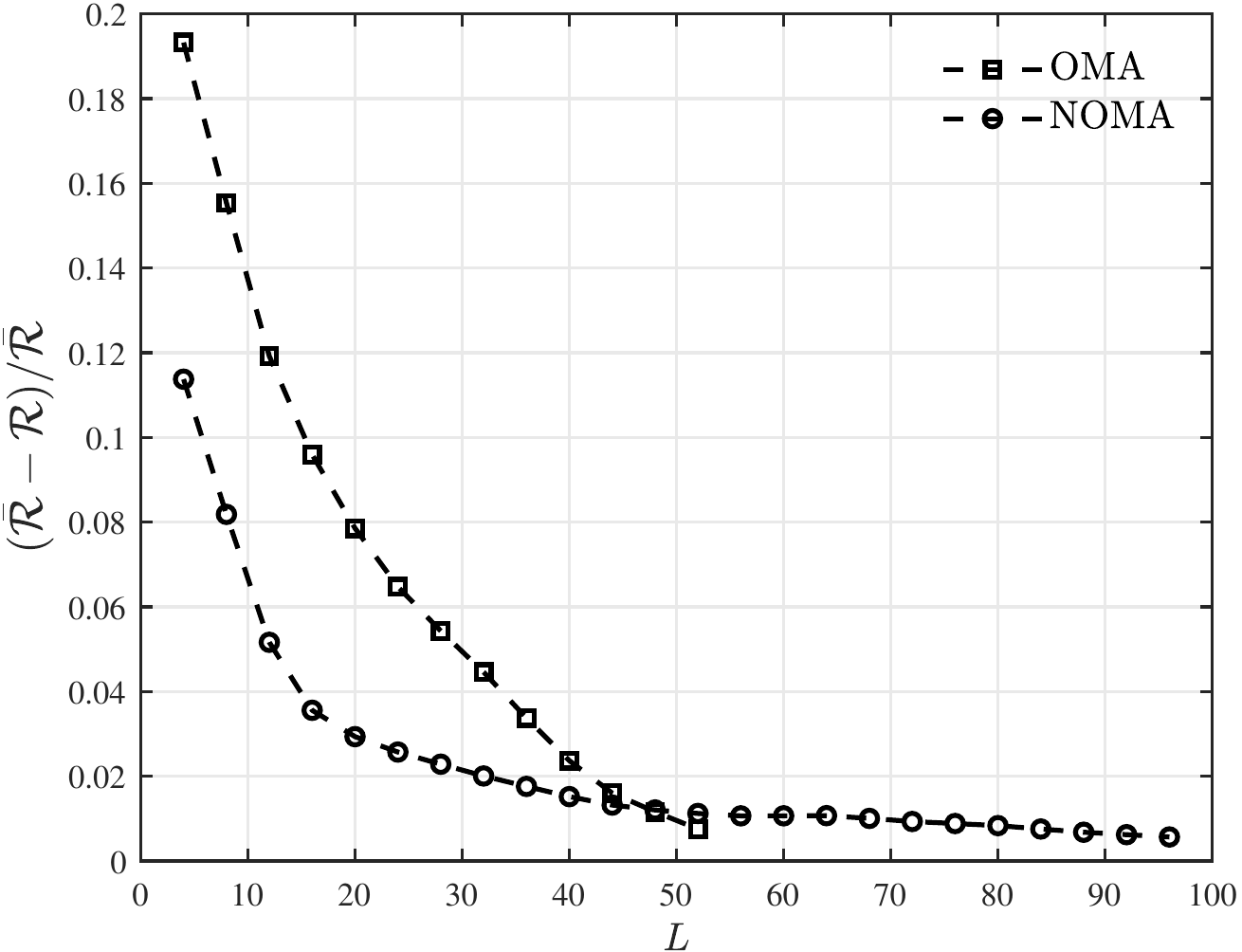}
\caption{The mRZF precoder; the relative error of the sum rate $\mathcal{R}$ via (29)  compared to the ergodic sum rate;  $(\bar{\mathcal{R}}-\mathcal{R})/\bar{\mathcal{R}}$ versus the number of antennas per AP,  $L = KN$,   for  25  APs and $\alpha = 0.8$.}
\label{fig:rzf_error} 
\end{figure}

For the mRZF precoder,  the SINR asymptotic  $\gamma_{nk}^{\text{mRZF}}$  in \eqref{Eqn:RZF_rate} must be validated. Thus,  Fig.~\ref{fig:rzf_error} shows the error of the sum rate $\mathcal{R}$ computed based on \eqref{Eqn:RZF_rate}   compared to the ergodic sum rate via simulations of the SINR. Clearly,  the  SINR asymptotic  holds even for small values of $L$ and becomes more accurate as $L$ increases. 
\begin{figure}
 \centering
 \includegraphics[width=8.8cm, height=7.5cm]{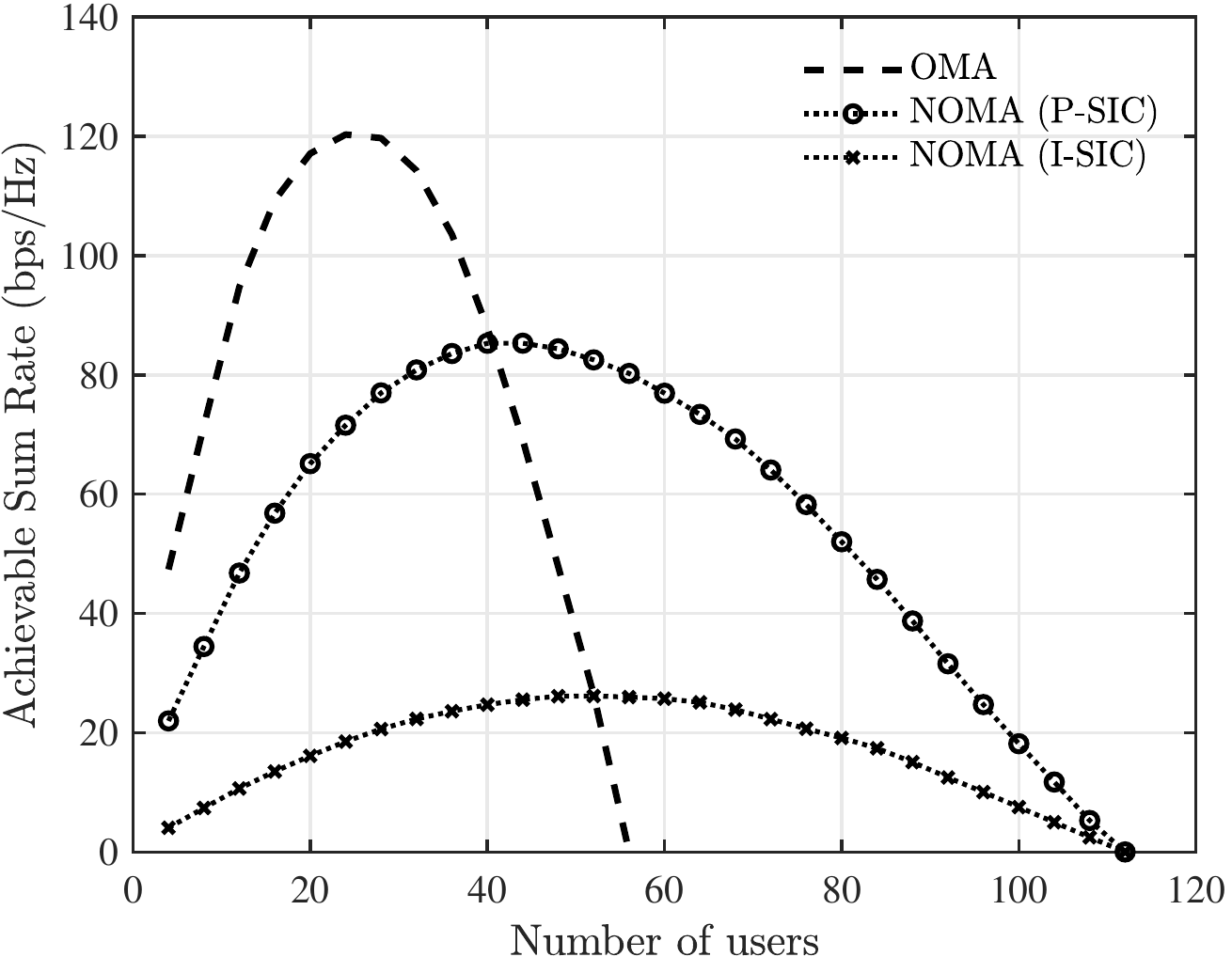}
\caption{The achievable sum rate versus the number of users for 25 APs and 60 antennas per AP and mRZF with  $\alpha = 0.8$.}
\label{fig:rzf_rate} 
\end{figure}

Fig.~\ref{fig:rzf_rate} shows the achievable sum rate of NOMA and OMA systems for different number of users with mRZF precoding and finite system dimensions. In this scheme, we simply consider $\alpha = 0.8$; however, the optimal  value of $ \alpha $ that accounts for the CSI imperfection could be further investigated \cite{Wagner2012, Zhu2016}. As expected, with   imperfect SIC, the resulting error propagation and    intra-cluster pilot contamination  limit the system performance. Just as in the case of  MRT and fpZF, for a large number of users, NOMA outperforms OMA; however, with a few users,  the converse is true.    

In Fig.~\ref{fig:compare_rate} and Fig.~\ref{fig:compare_rate_i}, we compare the performance of the three  precoders  for  perfect and imperfect SIC, respectively. We observe that, mRZF achieves higher rates than fpZF and MRT, since it tries  to  balance the inter-cluster interference mitigation and intra-cluster power enhancement. As well, for perfect SIC,
fpZF and mRZF outperform MRT as they are able to cancel the inter-cluster interference. Therefore, although fpZF and mRZF have the same front-hauling overhead as MRT, they can achieve higher rates. Furthermore, CF hybrid massive MIMO-NOMA  with either mRZF or fpZF outperforms  OMA systems with MRT as twice number of users could be served at the price of a negligible performance loss for a lightly loaded system.  We also observe from Fig.~\ref{fig:compare_rate_i}, with  imperfect SIC,  MRT and mRZF perform roughly the same for a  large number of users. This is because both fpZF and mRZF sacrifice  much  of the array gain to suppress the inter-cluster interference; however mRZF outperforms fpZF since it also considers the desired power. We also note that 
$\alpha$ may be optimized to enhance the performance of mRZF precoding.
\begin{figure}
 \centering
 \includegraphics[width=8.8cm, height=7.7cm]{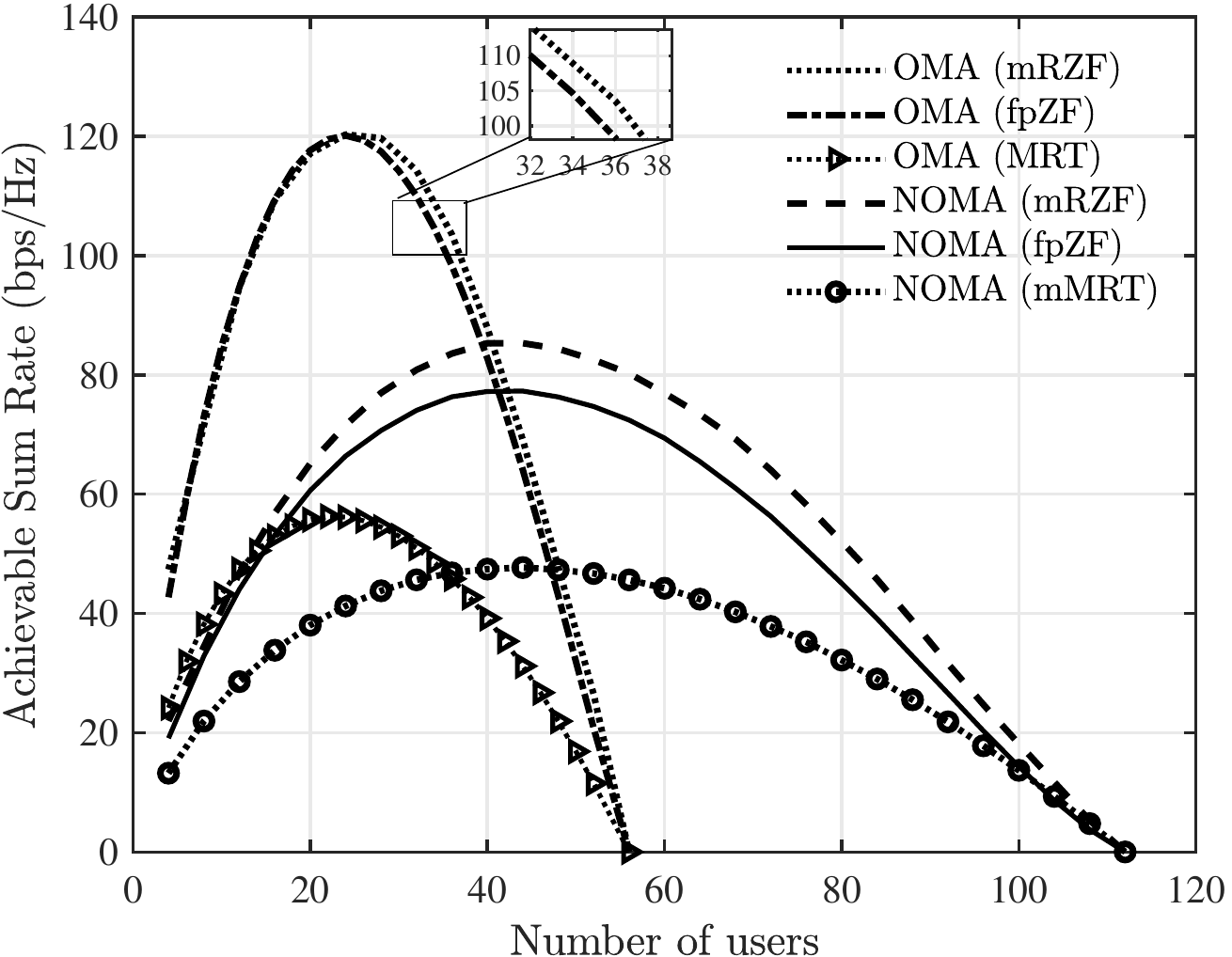}
\caption{The achievable sum rate versus the number of users for 25 APs and 60 antennas per AP and perfect SIC. }
\label{fig:compare_rate} 
\end{figure}

\begin{figure}
 \centering
 \includegraphics[width=8.8cm, height=7.5cm]{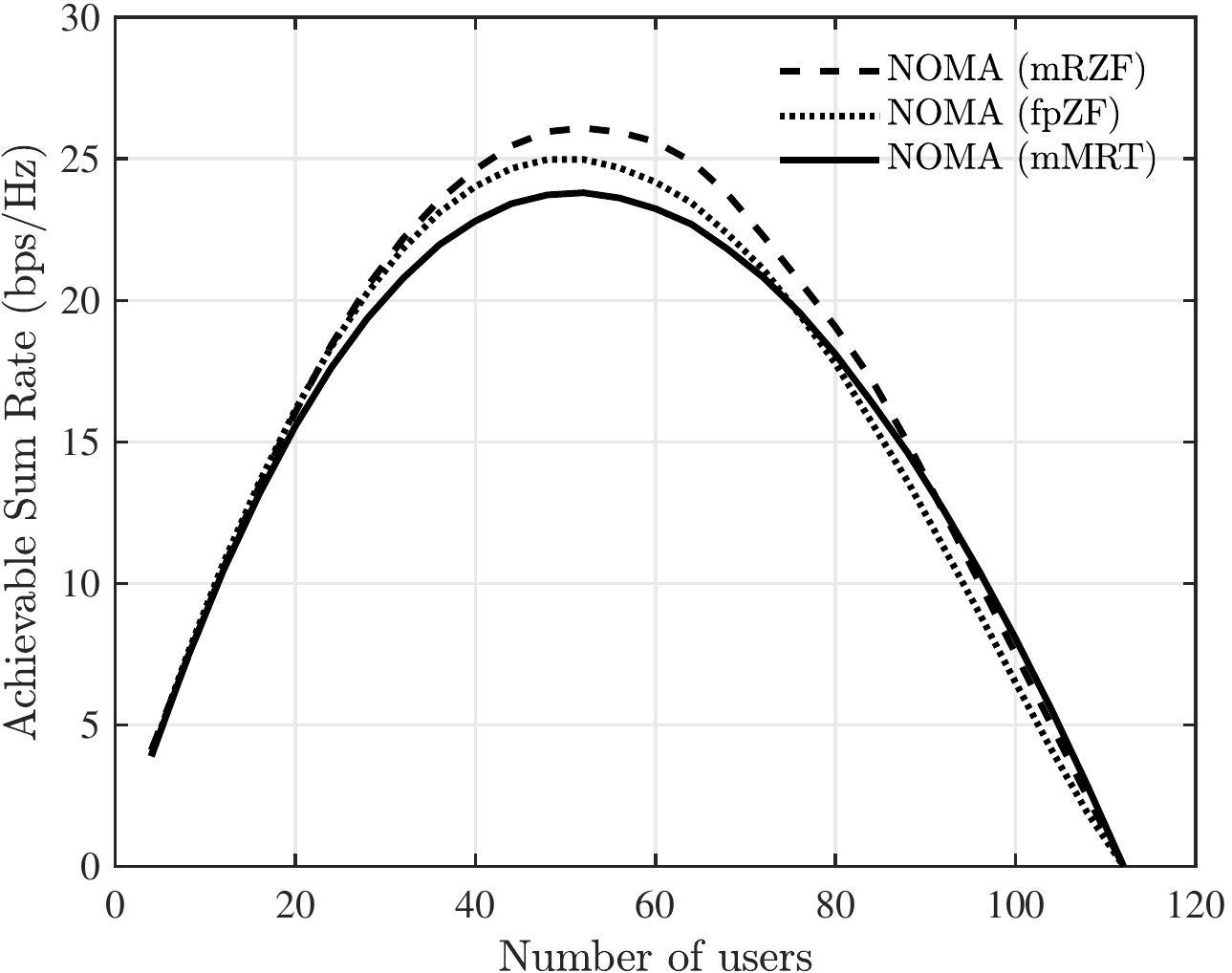}
\caption{The achievable sum rate versus the number of users for 25 APs and 60 antennas per AP (Imperfect SIC).}
\label{fig:compare_rate_i} 
\end{figure}
To further investigate these  three precoders, we also plot the cumulative distributions of their rates  per-cluster in Fig.~\ref{cdf}. It shows that  the per-cluster rate degrades due to  imperfect SIC. In particular, in $90 \%$-likely performance, all the three precoders behave almost the same and achieve low per-cluster rate of $\sim 1$ bps/Hz. In the case of perfect SIC, however, mRZF and fpZF significantly outperform  MRT. Thus, the rate loss due to the imperfect SIC is more considerable for fpZF and mRZF compared to that of MRT. More precisely, the $90 \%$-likely per-cluster rate loss is, respectively, about $1.8$ bps/Hz  and $2.2$ bps/Hz for fpZF and  mRZF, which are more than twice  than that of MRT ($0.8$ bps/Hz).
\begin{figure}
\centering
\includegraphics[width=8.8cm, height=7.5cm]{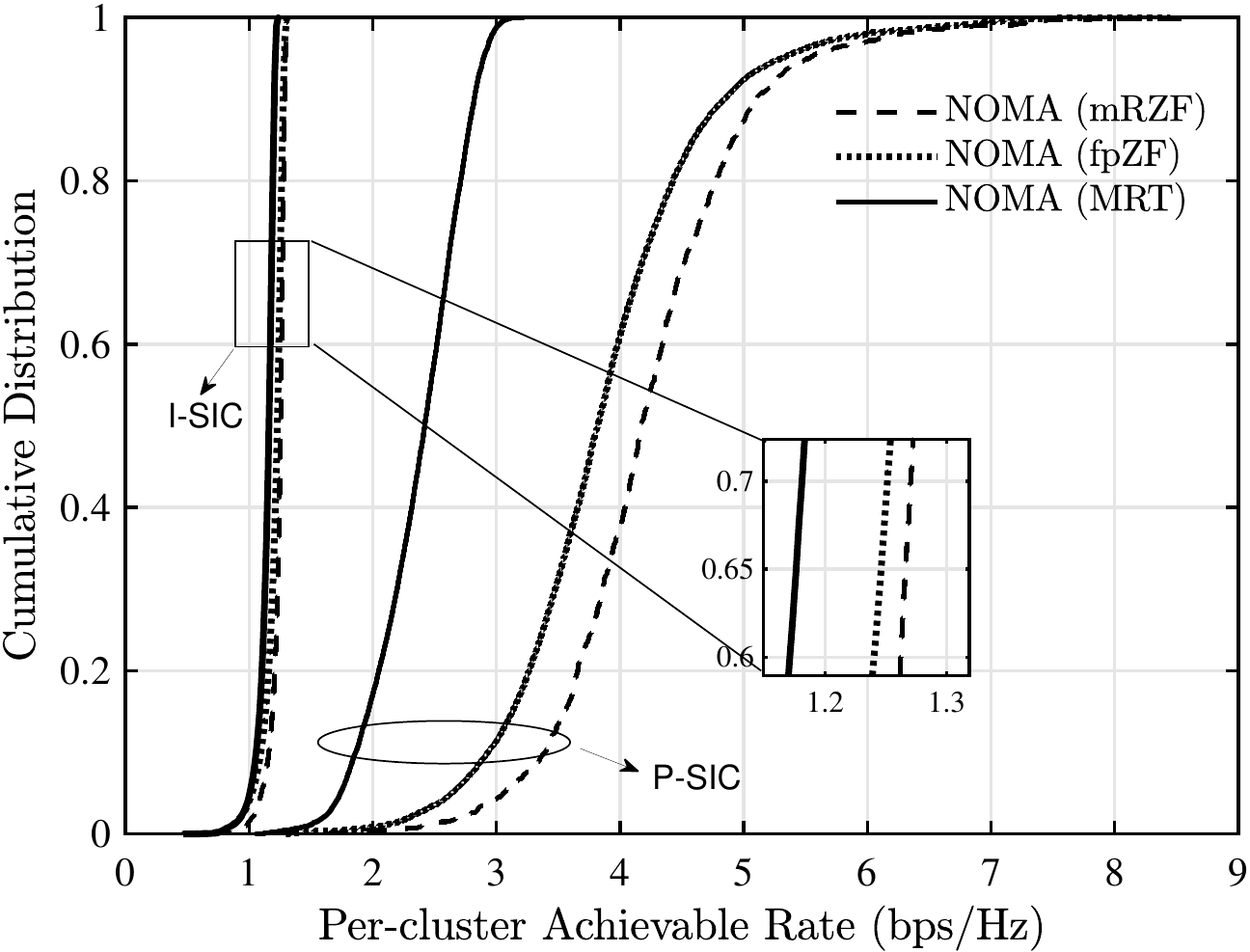}
\caption{Cumulative distribution of per-cluster achievable rate for  25 APs and 60 antennas per AP and  twenty clusters  (NOMA).}
\label{cdf} 
\end{figure}
\section{Conclusion}\label{conclusion}
In this paper, we analyzed the achievable rate of three standard precoders for   NOMA-aided CF massive MIMO.   We thus derived closed-form  sum rates for MRT and fpZF by   considering joint effects of intra-cluster pilot contamination, inter-cluster interference and imperfect SIC. However, since  a   closed-form  rate analysis is intractable for finite       mRZF precoder, we analyzed the  asymptotic regime, where the number of clusters and the number of antennas at each AP   grow extremely  large while keeping a finite ratio. 

We showed that  NOMA based CF massive MIMO  can support  significantly more  users compared to the OMA-hybrid  at the same time-frequency resources. For a large number of users, the NOMA-hybrid  outperforms the OMA-hybrid; however, for a few  users,  the former  achieves lower sum rate than  the latter  due to the effects of intra-cluster pilot contamination and imperfect SIC. It was further shown that, with  perfect SIC, mRZF and fpZF significantly outperforms MRT despite of having the same front-hauling overhead. Finally,  CF hybrid massive MIMO-NOMA  with either fpZF or mRZF  outperforms the  OMA systems with MRT. 

Future research includes the optimization of  $ \alpha $ of  mRZF precoder,  optimal user clustering and  power allocations. 
\vspace{-2ex}
\appendices
\section{Derivation of $\gamma_{nk}^{\rm MRT}$ \eqref{Eqn:mrt_rate}}\label{mrt_app}
  By invoking \eqref{Eqn:estimate}  in \eqref{Eqn:mrt}, the  MRT precoding vector   can be written as \begin{equation}\label{Eqn:mrt1}
\mathbf{w}_{mn} =  \frac{\hat{\mathbf{h}}_{mnk}}{\sqrt{L\theta_{mnk}}}.
\end{equation}

We first proceed to compute $\mathbb{E} \{\eta_{nk} \}$ as follows:
{\small{
\begin{align}\label{Eqn:nommrt}
\nonumber\mathbb{E} \{\eta_{nk} \} &= \mathbb{E} \left\{ \sum\limits_{m = 1}^M{ \mathbf{h}_{mnk}^\mathrm{H}\mathbf{w}_{mn}} \right\}\\
\nonumber &=\mathbb{E} \left\{\sum\limits_{m = 1}^M{(\hat{\mathbf{h}}_{mnk}^\mathrm{H} + \bm{\epsilon}_{mnk}^\mathrm{H}) \mathbf{w}_{mn}}\right\} \\
 &=  \sum\limits_{m = 1}^M{\sqrt{L\theta_{mnk}}}.
\end{align}}}
Since $ \mathbf{h}_{mnk}^\mathrm{H}\mathbf{w}_{mn}$ are independent for different values of $m$ and the variance of a sum of independent random variables is equal to the sum of the variances, we have
{\small{
\begin{align}\label{Eqn:r2mrt}
\nonumber& \mathbb{E} \left\{\left|(\eta_{nk} - \mathbb{E} \{\eta_{nk}\right\})\right|^2\} \\
 \nonumber &= \mathbb{E} \left\{\left|\sum\limits_{m = 1}^M{ \mathbf{h}_{mnk}^\mathrm{H}\mathbf{w}_{mn}}-  \mathbb{E} \left\{ \sum\limits_{m = 1}^M{ \mathbf{h}_{mnk}^\mathrm{H}\mathbf{w}_{mn}} \right\}\right|^2 \right\} \\
 \nonumber&= \sum\limits_{m = 1}^M{\mathbb{E} \left\{|\mathbf{h}_{mnk}^\mathrm{H}\mathbf{w}_{mn}|^2\right\}}-\sum\limits_{m = 1}^M{\left|\mathbb{E} \left\{\mathbf{h}_{mnk}^\mathrm{H}\mathbf{w}_{mn}\right\}\right|^2} \\
\nonumber &=  \sum\limits_{m = 1}^M{\mathbb{E} \left\{\left|(\hat{\mathbf{h}}_{mnk}^\mathrm{H}+ \bm{\epsilon}_{mnk}^\mathrm{H})\frac{\hat{\mathbf{h}}_{mnk}}{\sqrt{L\theta_{mnk}}}\right|^2\right\}}-\sum\limits_{m = 1}^M{L \theta_{mnk}}\\
 \nonumber &= \sum\limits_{m = 1}^M{\mathbb{E} \left\{\left|\hat{\mathbf{h}}_{mnk}^\mathrm{H}\frac{\hat{\mathbf{h}}_{mnk}}{\sqrt{L\theta_{mnk}}}\right|^2\right\}} + \sum\limits_{m = 1}^M{\mathbb{E} \left\{\left| \bm{\epsilon}_{mnk}^\mathrm{H}\frac{\hat{\mathbf{h}}_{mnk}}{\sqrt{L\theta_{mnk}}}\right|^2\right\}} \\ \nonumber &-\sum\limits_{m = 1}^M{L \theta_{mnk}} \\ 
\nonumber &\overset{(a)}= (L+1) \sum\limits_{m=1}^M{\theta_{mnk}} + \sum\limits_{m=1}^M{(\beta_{mnk}-\theta_{mnk})}- L\sum\limits_{m=1}^M{\theta_{mnk}} \\
&= \sum\limits_{m=1}^M{\beta_{mnk}},
\end{align}}}
where $(a)$ comes from the fact that, for any vector $\mathbf{z} \sim \mathcal{CN}(\mathbf{0},\theta \mathbf{I}_L)$, we have \cite{tulino2004}
\begin{equation}\label{Eqn:mean}
\mathbb{E} \left\{|\mathbf{z}^\mathrm{H} \mathbf{z}|^2 \right\} = (L^2+L)\theta^2.
\end{equation}
To find $\mathbb{E} \left\{|\eta_{nk}|^2\right\}$, we use   $
{\rm{Var}} \{ X\} = \mathbb{E} \{X^2\}- (\mathbb{E}\{X \})^2.$
Thus, we have
\begin{align}\label{Eqn:rmrt}
\nonumber \mathbb{E} \left\{|\eta_{nk}|^2\right\}&= \mathbb{E} \left\{\left|\left(\eta_{nk} - \mathbb{E} \left\{\eta_{nk}\right\}\right)\right|^2\right\} + (\mathbb{E} \{\eta_{nk}\})^2 \\
&= \sum\limits_{m=1}^M{\beta_{mnk}} + L \left(\sum\limits_{m=1}^M{\sqrt{\theta_{mnk}}}\right)^2.
\end{align}
Furthermore,
{{
\begin{align}\label{Eqn:r_primemrt}
\nonumber  &\mathbb{E} \left\{|\eta_{n'k}|^2\right\}= 
\mathbb{E} \left\{\left|\sum\limits_{m = 1}^M{ \mathbf{h}_{mnk}^\mathrm{H}\mathbf{w}_{mn'}}\right|^2 \right\}  \\
 \nonumber &=  \sum\limits_{m = 1}^M{\mathbb{E} \left\{\left|\mathbf{h}_{mnk}^\mathrm{H}\mathbf{w}_{mn'}\right|^2 \right\}}
 \\  \nonumber&+
 \sum\limits_{m = 1}^M{\sum\limits_{\mystack{m'=1}{m' \ne m}}^M{\mathbb{E} \left\{\left|\mathbf{h}_{mnk}^\mathrm{H}\mathbf{w}_{mn'} \left(\mathbf{h}_{m'nk}^\mathrm{H}\mathbf{w}_{m'n'} \right)\right| \right\}}}\\
& = \sum\limits_{m = 1}^M{\beta_{mnk}},
\end{align}}}
where the second term in the second equality is discarded as $\mathbf{h}_{mnk}^\mathrm{H}\mathbf{w}_{mn'} (\forall m \in 1,\ldots, M)$ are independent of each other.

Moreover, since $\mathbb{E} \{s_{ni}^* \hat{s}_{ni} \} = \mathbb{E} \{s_{ni} \hat{s}_{ni}^* \} = \rho_{ni}$ , we have
\begin{align}\label{Eqn:r3}
\nonumber  \mathbb{E} \left\{\left| \eta_{nk} s_{ni} - \mathbb{E} \left\{\eta_{nk}\right\} \hat{s}_{ni}\right|^2 \right\} &= \mathbb{E} \left\{| \eta_{nk}|^2 \right\} \\
&+ (1-2\rho_{ni})  \mathbb{E}^2 \{\eta_{nk}\}.
\end{align}
Finally, by substituting the above formulations into \eqref{Eqn:INTERFERENCE}, $\gamma_{nk}^{\rm MRT}$ can be obtained as \eqref{Eqn:mrt_rate}.

\section{Derivation of $\gamma_{nk}^{\rm ZF}$ \eqref{Eqn:rat2}}\label{zf_app}
For fpZF precoding, we first compute $\mathbb{E} \{\eta_{nk} \}$ as follows:
\begin{align}\label{Eqn:nom}
\nonumber\mathbb{E} \{\eta_{nk} \} &= \mathbb{E} \left\{ \sum\limits_{m = 1}^M{ \mathbf{h}_{mnk}^\mathrm{H}\mathbf{w}_{mn}} \right\} \\
\nonumber &=\mathbb{E} \left\{\sum\limits_{m = 1}^M{\left(\hat{\mathbf{h}}_{mnk}^\mathrm{H} + \bm{\epsilon}_{mnk}^\mathrm{H}\right) \mathbf{w}_{mn}}\right\} \\
 &= \sum\limits_{m = 1}^M{\sqrt{(L-N)\theta_{mnk}}}.
\end{align}
In the denominator, we have
\begin{equation}\label{Eqn:r2}
 \mathbb{E} \left\{\left|\left(\eta_{nk} - \mathbb{E} \{\eta_{nk}\}\right)\right|^2\right\} = \mathbb{E} \left\{|\eta_{nk}|^2\right\} - \mathbb{E}^2 \left\{\eta_{nk}\right\}.
\end{equation}
Referring to \eqref{Eqn:r3} and \eqref{Eqn:r2}, we just need to calculate $\mathbb{E} \left\{| \eta_{n'k}|^2 \right\}, \forall n'$. Employing the result in \eqref{Eqn:INTER}, we have
{{
\begin{align}\label{Eqn:r4}
 \nonumber &\mathbb{E} \left\{| \eta_{n'k}|^2 \right\} = \mathbb{E} \left\{\left|\sum\limits_{m = 1}^M{ \mathbf{h}_{mnk}^\mathrm{H}\mathbf{w}_{mn'}}\right|^2\right\}\\
\nonumber &= \mathbb{E} \left\{\left|\sum\limits_{m = 1}^M{ \left(\hat{\mathbf{h}}_{mnk}^\mathrm{H} +\bm{\epsilon}_{mnk}^\mathrm{H}\right) \mathbf{w}_{mn'}}\right|^2\right\}\\ 
    &= \left\{ \begin{array}{l}
(L-N) \left(\sum\limits_{m=1}^M {\sqrt{\theta_{mnk}}}\right)^2 + \sum\limits_{m=1}^M {(\beta_{mnk}- \theta_{mnk})},\\
\qquad\qquad\qquad\qquad\qquad\qquad\qquad\qquad\qquad\quad n = n'\\ 
\sum\limits_{m=1}^M {(\beta_{mnk}- \theta_{mnk})},  \!\!\!\!\!\!\!\!\! \IEEEeqnarraynumspace\IEEEeqnarraynumspace \IEEEeqnarraynumspace \IEEEeqnarraynumspace \IEEEeqnarraynumspace \IEEEeqnarraynumspace \;\;\;\;   n \neq n'  \\ 
 \end{array} \right .
 \end{align}}}
 
By substituting \eqref{Eqn:nom} and  \eqref{Eqn:r4} into \eqref{Eqn:INTERFERENCE},  $\gamma_{nk}^{{\rm{fpZF}}}$ is in  \eqref{Eqn:rat2}. 

\section{Derivation of $\gamma_{nk}^{\rm mRZF}$ \eqref{Eqn:RZF_rate}}\label{RZF_app}
To  derive SINR $\gamma_{nk}^{\rm mRZF}$, we need the following terms:
\begin{itemize}
    \item Desired signal power
      \begin{equation}\label{Eqn:desired power}
p_d = p_{nk}\left(\sum\limits_{m= 1}^M{  \mathbb{E} \left\{\mathbf{h}_{mnk}^\mathrm{H}\mathbf{w}_{mn} \right\}} \right)^2.
\end{equation}
 \item Variance of the beamforming gain uncertainty
 \small{
 \begin{align}\label{Eqn:BfGainUncertainty}
p_{_{I1}} =  p_{nk}\left(\sum\limits_{m = 1}^M{\mathbb{E} \left\{|\mathbf{h}_{mnk}^\mathrm{H}\mathbf{w}_{mn}|^2\right\}}-\sum\limits_{m = 1}^M{\left|\mathbb{E} \left\{\mathbf{h}_{mnk}^\mathrm{H}\mathbf{w}_{mn}\right\}\right|^2}\right). 
\end{align}}
\item Inter-cluster interference
\begin{align}\label{inter-cluster}
 p_{_{I2}} &= \sum\limits_{\mystack{n'=1}{n' \ne 
n}}^N{ p_{n'} \mathbb{E} \left\{ \left|\sum\limits_{m = 1}^M{ \mathbf{h}_{mnk}^\mathrm{H}\mathbf{w}_{mn'}}\right|^2 \right\}}.
\end{align}
\end{itemize}
In  order to calculate the intra-cluster interference due to the pilot contamination and imperfect SIC, we need 
$\mathbb{E} \{ |\sum_{m = 1}^M{ \mathbf{h}_{mnk}^\mathrm{H}\mathbf{w}_{mn}}|^2\}$, which can be derived by exploiting \eqref{Eqn:BfGainUncertainty} and the definition of the variance.

We now proceed to calculate \eqref{Eqn:desired power}, \eqref{Eqn:BfGainUncertainty} and \eqref{inter-cluster}  when $L,N \to \infty$ while keeping a finite ratio. 

Let 
\begin{align}\label{h_bar_difinition}
 \bar{\mathbf{h}}_{mn} =\sqrt{L}\mathbf{\Theta}_{mn}^{1/2} \bar{\mathbf{g}}_{mn}, 
\end{align}
where $\bar{\mathbf{g}}_{mn}$ has i.i.d complex entries  with zero mean and variance of $\frac{1}{L}$ (
$\bar{\mathbf{g}}_{mn}\sim \mathcal{CN}(\mathbf{0},\frac{1}{L}\mathbf{I}_L)$) and $\mathbf{\Theta}_{mn} = {(1 +\tau p_p \sum\limits_{k=1}^k{\beta_{mnk}})}\mathbf{I}_L $. Also, assume that $\mathbf{\Theta}_{mn};~ \forall m,n$ and $\frac{1}{L}\bar{\mathbf{H}}_m \bar{\mathbf{H}}_m^\mathrm{H} = \sum_{n=1}^{N}{\mathbf{\Theta}_{mn}^{1/2}\bar{\mathbf{g}}_{mn}\bar{\mathbf{g}}_{mn}^\mathrm{H}\mathbf{\Theta}_{mn}^{1/2}};~\forall m$ have uniformly bounded spectral norms \cite{Wagner2012}. Moreover, we define
\begin{align}\label{Eqn:definition1}
\mathbf{\Sigma}_m  \triangleq \left(\bar{\mathbf{H}}_m \bar{\mathbf{H}}_m^\mathrm{H} + L\alpha \mathbf{I}_L\right), \end{align}
\begin{align}\label{Eqn:definition2}
\mathbf{\Sigma}_{mn}  \triangleq \left(\bar{\mathbf{H}}_{mn} \bar{\mathbf{H}}_{mn}^\mathrm{H}  + L\alpha \mathbf{I}_L\right),
\end{align}
\begin{align}\label{Eqn:definition3}
\nonumber \psi_{mn}  &\triangleq {\mathbb{E} \left\{ \left\lVert  (\bar{\mathbf{H}}_m \bar{\mathbf{H}}_m^\mathrm{H} + L\alpha \mathbf{I}_L)^{-1}\bar{\mathbf{h}}_{mn} \right \rVert^2 \right\}}\\
&= {\mathbb{E} \left\{\bar{\mathbf{h}}_{mn}^\mathrm{H}\mathbf{\Sigma}_m^{-2} \bar{\mathbf{h}}_{mn}  \right\}}.
\end{align}
where $\bar{\mathbf{H}}_{mn}$ is equal to $\bar{\mathbf{H}}_{m}$ with the $n$th column removed.

\begin{itemize}
    \item Desired signal power
    
    Regarding \eqref{Eqn:desired power}, we derive $\mathbb{E} \{\mathbf{h}_{mnk}\mathbf{w}_{mn}\}$ as follows: 
{\small{
\begin{align}\label{Eqn:DesiredPower}
\nonumber &\mathbb{E} \left\{\mathbf{h}_{mnk}\mathbf{w}_{mn}\right\}  \\
\nonumber &=
\frac{1}{\sqrt{\psi_{mn}}}\mathbb{E} \left\{\mathbf{h}_{mnk}^\mathrm{H} \mathbf{\Sigma}_m^{-1} \bar{\mathbf{h}}_{mn}\right\}\\ 
\nonumber &\overset{(a)}= \frac{1}{\sqrt{\psi_{mn}}}\mathbb{E} \left\{\frac{\mathbf{h}_{mnk}^\mathrm{H} \mathbf{\Sigma}_{mn}^{-1} \bar{\mathbf{h}}_{mn}}{1 +\bar{\mathbf{h}}_{mn}^\mathrm{H}\mathbf{\Sigma}_{mn}^{-1} \bar{\mathbf{h}}_{mn}}\right\}\\
\nonumber &=\frac{1}{\sqrt{\psi_{mn}}} \mathbb{E} \left\{\frac{\left(c_{mnk}\bar{\mathbf{h}}_{mn} + \bm{\epsilon}_{mnk} \right)^\mathrm{H}\mathbf{\Sigma}_{mn}^{-1} \bar{\mathbf{h}}_{mn}}{1 +\bar{\mathbf{h}}_{mn}^\mathrm{H}\Sigma_{mn}^{-1} \bar{\mathbf{h}}_{mn}}\right\}\\
\nonumber &=\frac{1}{\sqrt{\psi_{mn}}} \mathbb{E} \left\{\frac{c_{mnk}\bar{\mathbf{h}}_{mn}^\mathrm{H} \mathbf{\Sigma}_{mn}^{-1} \bar{\mathbf{h}}_{mn}}{1 +\bar{\mathbf{h}}_{mn}^\mathrm{H}\mathbf{\Sigma}_{mn}^{-1} \bar{\mathbf{h}}_{mn}} + \frac{ \bm{\epsilon}_{mnk}^\mathrm{H}\mathbf{\Sigma}_{mn}^{-1} \bar{\mathbf{h}}_{mn}}{1 +\bar{\mathbf{h}}_{mn}^\mathrm{H}\mathbf{\Sigma}_{mn}^{-1} \bar{\mathbf{h}}_{mn}} \right\},\\
\end{align}}}
where $(a)$ comes from the matrix inversion Lemma \cite{Couillet,Wagner2012}. 
In order to solve \eqref{Eqn:DesiredPower}, we use the following Lemma.

\begin{lem} \label{lemma1}
Let $\mathbf{A}\in \mathcal{C}^{L \times L}$ and $\mathbf{x},\mathbf{y} \sim \mathcal{CN}(\mathbf{0}, \frac{1}{L}\mathbf{I}_L)$. Assume that $\mathbf{A}$ has uniformly bounded spectral norm (with respect to $L$) and that $\mathbf{x}$ and $\mathbf{y}$ are mutually independent and independent of $\mathbf{A}$. Now, as ${L} \longrightarrow \infty$,
\begin{align}
 a.~\mathbf{x}^\mathrm{H} \mathbf{A} \mathbf{x} \mathop \to \limits^{a.s.} \frac{1}{L} {\rm{tr}}\left[\mathbf{A}\right] , ~ 
 b.~\mathbf{x}^\mathrm{H} \mathbf{A} \mathbf{y} \mathop \to \limits^{a.s.}0,
\end{align}
\end{lem}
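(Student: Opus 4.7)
My plan is the textbook Bai--Silverstein trace-lemma strategy. For each claim, I first identify the exact conditional mean, then bound the variance using the independence of $\mathbf{x}$ (and $\mathbf{y}$) from $\mathbf{A}$ together with the spectral-norm hypothesis, and finally upgrade convergence in probability to almost sure convergence via a fourth-moment Borel--Cantelli argument.

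For part (a), I would write $\mathbf{x}=\tfrac{1}{\sqrt{L}}\mathbf{z}$ with $\mathbf{z}\sim\mathcal{CN}(\mathbf{0},\mathbf{I}_L)$, so that $\mathbf{x}^{\mathrm{H}}\mathbf{A}\mathbf{x}=\tfrac{1}{L}\mathbf{z}^{\mathrm{H}}\mathbf{A}\mathbf{z}$. Conditioning on $\mathbf{A}$, a direct computation (or one application of Wick's formula for circular Gaussians) gives the standard identities $\mathbb{E}[\mathbf{z}^{\mathrm{H}}\mathbf{A}\mathbf{z}\mid\mathbf{A}]=\mathrm{tr}[\mathbf{A}]$ and $\mathrm{Var}[\mathbf{z}^{\mathrm{H}}\mathbf{A}\mathbf{z}\mid\mathbf{A}]=\mathrm{tr}[\mathbf{A}\mathbf{A}^{\mathrm{H}}]$. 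Dividing by $L$ and $L^{2}$ respectively yields the desired mean $\tfrac{1}{L}\mathrm{tr}[\mathbf{A}]$ and a variance bounded by $\tfrac{1}{L^{2}}\mathrm{tr}[\mathbf{A}\mathbf{A}^{\mathrm{H}}]\le\tfrac{\|\mathbf{A}\|^{2}}{L}=O(1/L)$, where the uniform spectral-norm bound is essential in the last step. Part (b) is symmetric: conditioning on $(\mathbf{A},\mathbf{y})$, the conditional mean vanishes because $\mathbf{x}$ is zero-mean and independent of $(\mathbf{A},\mathbf{y})$, while the conditional variance equals $\tfrac{1}{L}\mathbf{y}^{\mathrm{H}}\mathbf{A}^{\mathrm{H}}\mathbf{A}\mathbf{y}$; taking expectation over $\mathbf{y}$ produces the unconditional variance $\tfrac{1}{L^{2}}\mathrm{tr}[\mathbf{A}^{\mathrm{H}}\mathbf{A}]=O(1/L)$. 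Chebyshev then gives convergence in probability in both parts.

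The genuine obstacle is upgrading to \emph{almost sure} convergence, because the $O(1/L)$ variance alone yields a probability tail of order $1/L$, which is not summable in $L$. My remedy is to bound the fourth central moment: applying Wick's theorem to the complex-Gaussian entries and then the spectral-norm assumption to the resulting traces, one obtains $\mathbb{E}\bigl[\bigl|\mathbf{x}^{\mathrm{H}}\mathbf{A}\mathbf{x}-\tfrac{1}{L}\mathrm{tr}[\mathbf{A}]\bigr|^{4}\bigr]=O(1/L^{2})$ and analogously $\mathbb{E}\bigl[|\mathbf{x}^{\mathrm{H}}\mathbf{A}\mathbf{y}|^{4}\bigr]=O(1/L^{2})$, the constants being controlled uniformly in $\mathbf{A}$ through the spectral-norm hypothesis (which is handled by conditioning on $\mathbf{A}$ before applying the outer expectation, so that the randomness of $\mathbf{A}$ causes no trouble). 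Markov's inequality then produces a tail of order $1/L^{2}$, which is summable in $L$, and Borel--Cantelli closes the argument. A cleaner but heavier alternative would be to invoke the Hanson--Wright concentration inequality for Gaussian quadratic/bilinear forms, which furnishes sub-exponential tails and makes summability automatic; I prefer the fourth-moment route here since it stays within the Wick-type calculus already used elsewhere in the paper.
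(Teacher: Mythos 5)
Your proposal is mathematically sound, but note that the paper itself offers no proof of this lemma: it is stated in Appendix C as a known ``trace lemma'' and used directly, with the surrounding derivation leaning on the corresponding results in \cite{Wagner2012} (ultimately the Bai--Silverstein lemma). So there is nothing in the paper to match your argument against step by step; what you have written is a self-contained substitute for a citation. As such it is essentially the standard proof, and the key quantitative claims all check out: conditioning on $\mathbf{A}$, the variance of $\mathbf{z}^{\mathrm{H}}\mathbf{A}\mathbf{z}$ for $\mathbf{z}\sim\mathcal{CN}(\mathbf{0},\mathbf{I}_L)$ is indeed $\mathrm{tr}[\mathbf{A}\mathbf{A}^{\mathrm{H}}]\le L\lVert\mathbf{A}\rVert^2$, the fourth central moment is $O\bigl((\mathrm{tr}[\mathbf{A}\mathbf{A}^{\mathrm{H}}])^2+\mathrm{tr}[(\mathbf{A}\mathbf{A}^{\mathrm{H}})^2]\bigr)=O(L^2)$ so that the normalized form has fourth moment $O(1/L^2)$, and the analogous bound for the bilinear form in part (b) follows from its conditional Gaussianity given $(\mathbf{A},\mathbf{y})$. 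You correctly identify the genuine issue --- that the $O(1/L)$ Chebyshev tail is not summable and one must pass to fourth moments (or Hanson--Wright) before Borel--Cantelli --- and you correctly handle the randomness of $\mathbf{A}$ by conditioning, which is where the uniform spectral-norm hypothesis earns its keep. Two small points worth making explicit if you write this out in full: the almost-sure statement should be read as $\mathbf{x}^{\mathrm{H}}\mathbf{A}\mathbf{x}-\tfrac{1}{L}\mathrm{tr}[\mathbf{A}]\to 0$ for a \emph{sequence} of matrices $\mathbf{A}=\mathbf{A}_L$ (since $\tfrac{1}{L}\mathrm{tr}[\mathbf{A}_L]$ need not itself converge), which your centered fourth-moment bound already delivers; and the Borel--Cantelli argument implicitly requires the probability space to couple the vectors across different $L$, a standard convention in this literature that is worth a sentence.
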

where $\mathop  \to \limits^{a.s.}$ denotes almost sure convergence as $L \longrightarrow \infty$. 

By substituting $\bar{\mathbf{h}}_{mn}$ with \eqref{h_bar_difinition} and applying  Lemma \ref{lemma1}.a and Theorem 2 of \cite{Wagner2012} to the first term of \eqref{Eqn:DesiredPower}, we have
\begin{align}\label{Eqn:desireda}
\nonumber \frac{c_{mnk}\bar{\mathbf{h}}_{mn}^\mathrm{H} \mathbf{\Sigma}_{mn}^{-1} \bar{\mathbf{h}}_{mn}}{1 +\bar{\mathbf{h}}_{mn}^\mathrm{H}\mathbf{\Sigma}_{mn}^{-1} \bar{\mathbf{h}}_{mn}} &= \frac{c_{mnk}L   \bar{\mathbf{g}}_{mn}^\mathrm{H} \mathbf{\Theta}_{mn}^{1/2} \mathbf{\Sigma}_{mn}^{-1} \mathbf{\Theta}_{mn}^{1/2} \bar{\mathbf{g}}_{mn}}{1 +L   \bar{\mathbf{g}}_{mn}^\mathrm{H} \mathbf{\Theta}_{mn}^{1/2} \mathbf{\Sigma}_{mn}^{-1} \mathbf{\Theta}_{mn}^{1/2} \bar{\mathbf{g}}_{mn}} \\
\nonumber &= \frac{c_{mnk} \bar{\mathbf{g}}_{mn}^\mathrm{H} \mathbf{\Theta}_{mn}^{1/2} \mathbf{C}_{mn}^{-1} \mathbf{\Theta}_{mn}^{1/2} \bar{\mathbf{g}}_{mn}}{1 +  \bar{\mathbf{g}}_{mn}^\mathrm{H} \mathbf{\Theta}_{mn}^{1/2} \mathbf{C}_{mn}^{-1} \mathbf{\Theta}_{mn}^{1/2} \bar{\mathbf{g}}_{mn}} \\
&\mathop \to \limits^{a.s.} \frac{c_{mnk} e_{mn}^o}{1 +e_{mn}^o},
\end{align}
where $\mathbf{C}_{mn} = \mathbf{\Gamma}_{mn} + \alpha \mathbf{I}_L$ with  $\mathbf{\Gamma}_{mn} = \frac{1}{L}\bar{\mathbf{H}}_{mn} \bar{\mathbf{H}}_{mn}^\mathrm{H} $. Besides, $e_{mn}^o$ is given in \eqref{TT1}.
Similarly, Since, $\hat{\mathbf{h}}_{mnk} = c_{mnk} \bar{\mathbf{h}}_{mn}$ is independent of $\bm{\epsilon}_{mnk}$, by applying Lemma \ref{lemma1}.b to the second term of \eqref{Eqn:DesiredPower}, we have
\begin{align}\label{Eqn:error}
\bm{\epsilon}_{mnk}^\mathrm{H}\mathbf{\Sigma}_{mn}^{-1} \bar{\mathbf{h}}_{mn}\mathop \to \limits^{a.s.} 0.
\end{align}
Regarding \eqref{Eqn:definition3}, to find the value of $\psi_{mn}$, we need to calculate
$\mathbb{E} \left\{\bar{\mathbf{h}}_{mn}^\mathrm{H} \mathbf{\Sigma}_m^{-2} \bar{\mathbf{h}}_{mn}\right\}$.

By employing matrix inversion Lemma, Theorems 1 and  2 of \cite{Wagner2012}, we obtain
\begin{align}\label{Eqn:gamma1}
\nonumber{ {\bar{\mathbf{h}}_{mn}^\mathrm{H} \mathbf{\Sigma}_m ^{-2}\bar{\mathbf{h}}_{mn}}}
 &\overset{(a)}= \frac{1}{L} {\frac{\bar{\mathbf{g}}_{mn}^\mathrm{H}\mathbf{\Theta}_{mn}^{1/2} \mathbf{C}_{mn}^{-2}\mathbf{\Theta}_{mn}^{1/2} \bar{\mathbf{g}}_{mn}}{(1+\bar{\mathbf{g}}_{mn}^\mathrm{H} \mathbf{\Theta}_{mn}^{1/2}\mathbf{C}_{mn}^{-1} \mathbf{\Theta}_{mn}^{1/2}\bar{\mathbf{g}}_{mn})^2 }}\\
\nonumber &\mathop \to \limits^{a.s.}\frac{1}{L} {\frac{\frac{1}{L} {\rm{tr}}\left[ \mathbf{\Theta}_{mn}\mathbf{C}_{mn}^{-2}\right] }{(1+\frac{1}{L}{\rm{tr}} \left[ \mathbf{\Theta}_{mn}\mathbf{C}_{mn}^{-1}\right])^2 }} \\
&\mathop \to \limits^{a.s.} \frac{1}{L} {\frac{\frac{1}{L} {\rm{tr}} \left[\mathbf{\Theta}_{mn}\mathbf{T}'_{m}\right]  }{(1+\frac{1}{L}{\rm{tr}}\left[ \mathbf{\Theta}_{mn}\mathbf{T}_{m}\right])^2 }},
\end{align}
where $(a)$ comes from the matrix inversion Lemma which is applied twice, $\mathbf{T}_{mn}$ is defined in \eqref{TT2} and
\begin{align}\label{Eqn:tt2}
\mathbf{T}'_{m} = \mathbf{T}_{m} \left[\frac{1}{L} \sum\limits_{j =1}^{N}{\frac{ \mathbf{\Theta}_{mj} e'_{mj} }{(1+e_{mj})^2 }}+ \mathbf{I}_L\right]\mathbf{T}_{m},
\end{align}
where
\begin{align}\label{Eqn:tt3}
e'_{mj} = \frac{1}{L} {\rm{tr}} \left[ \mathbf{\Theta}_{mj}\mathbf{T}'_{m}\right].
\end{align}
By employing Theorem 2 of \cite{Wagner2012}, we finally have
\begin{align}\label{Eqn:gamma_fin}
 {\bar{\mathbf{h}}_{mn}^\mathrm{H} \mathbf{\Sigma}_m ^{-2}\bar{\mathbf{h}}_{mn}}\mathop \to \limits^{a.s.} \psi_{mn}^o,
\end{align}
where $\psi_{m}^o $ is given in \eqref{TT3}.
Therefore, by invoking \eqref{Eqn:desireda} and \eqref{Eqn:gamma_fin} in \eqref{Eqn:desired power}, desired signal power can be written as
\begin{align}\label{Eqn:desired_n}
p_d \mathop \to \limits^{a.s.} p_{n,k}\left(\sum\limits_{m= 1}^M{  \frac{1}{\sqrt{\psi_{mn}^o}}\frac{c_{mnk} e_{mn}^o}{1 + e_{mn}^o}} \right)^2.
\end{align}
\item Variance of the beamforming gain uncertainty\\
Regarding \eqref{Eqn:BfGainUncertainty}, We need to calculate the term ($\mathbb{E} \left\{|\mathbf{h}_{mnk}^\mathrm{H}\mathbf{w}_{mn}|^2\right\}$). By employing \eqref{Eqn:DesiredPower} and \eqref{Eqn:desireda}, we have
\begin{eqnarray}\label{Eqn:gain uncertainty2}
 |\mathbf{h}_{mnk}^\mathrm{H} \mathbf{w}_{mn}|^2 
\mathop \to \limits^{a.s.} \frac{1}{\psi_{mn}^o}\frac{c_{mnk}^2 (e_{mn}^o)^2}{(1 + e_{mn}^o)^2}.
\end{eqnarray}
Substituting \eqref{Eqn:desireda}, \eqref{Eqn:gamma_fin} and \eqref{Eqn:gain uncertainty2} in \eqref{Eqn:BfGainUncertainty}, we obtain
{\small{
\begin{align}\label{Eqn:gain uncertainty3}
 \nonumber p_{_{I1}} &\mathop \to \limits^{a.s.} p_{nk} \left(\sum\limits_{m = 1}^M{\frac{1}{\psi_{mn}^o}\frac{c_{mnk}^2 (e_{mn}^o)^2}{(1 + e_{mn}^o)^2}}-\sum\limits_{m = 1}^M{\frac{1}{\psi_{mn}^o}\frac{c_{mnk}^2 (e_{mn}^o)^2}{(1 + e_{mn}^o)^2}}\right)\\
 &\mathop \to \limits^{a.s.} 0.
\end{align}}}
We also find 
\begin{align}\label{SIC_INTER}
\mathbb{E} \left\{ \left|\sum\limits_{m = 1}^M{ \mathbf{h}_{mnk}^\mathrm{H}\mathbf{w}_{mn}}\right|^2 \right\} \mathop \to \limits^{a.s.}  \left(\sum\limits_{m= 1}^M{  \frac{1}{\sqrt{\psi_{mn}^o}}\frac{c_{mnk} e_{mn}^o}{1 + e_{mn}^o}} \right)^2. 
\end{align}

\item Inter-cluster interference

In order to derive \eqref{inter-cluster}, we define
\begin{equation}\label{pow}
\mathbf{P}_{mn} \triangleq {\rm{Diag}}\left(\frac{p_{1}}{\psi_{m1}^o},\ldots, \frac{p_{n-1}}{\psi_{m(n-1)}^o},\frac{p_{n+1}}{\psi_{m(n+1)}^o}, \ldots,\frac{p_{N}}{\psi_{mN}^o}\right).
\end{equation}
By applying  the  matrix inversion Lemma, Lemma \ref{lemma1} and \eqref{pow}, \eqref{inter-cluster} can be written as
\begin{align}\label{interferenc2}
\nonumber p_{_{I2}}&=\sum\limits_{\mystack{n'=1}{n' \ne 
n}}^N{p_{n'} \sum\limits_{m = 1}^M{\mathbb{E} \left\{ \left|\mathbf{h}_{mnk}^\mathrm{H}\mathbf{w}_{mn'}\right|^2 \right\}}} \\
 &=
\sum\limits_{m = 1}^M{\mathbb{E} \left\{ {\mathbf{h}}_{mnk}^\mathrm{H} \mathbf{\Sigma}_{m}^{-1} \bar{\mathbf{H}}_{mn} \mathbf{P}_{mn} \bar{\mathbf{H}}_{mn}^\mathrm{H}\mathbf{\Sigma}_{m}^{-1}{\mathbf{h}}_{mnk} \right \}}.
\end{align}
Since ${\mathbf{h}}_{mnk} = {\hat{\mathbf{h}}}_{mnk} + \bm{\epsilon}_{mnk}$ and $\hat{\mathbf{h}}_{mnk} = c_{mnk} \bar{\mathbf{h}}_{mn}$, we find that the estimated channels of the  users  in  the  same  cluster  are parallel. Accordingly, ${\mathbf{h}}_{mnk}$ can be written as
\begin{align} \label{h_mat}
 {\mathbf{h}}_{mnk} = \sqrt{L} \mathbf{\Theta}_{mn}^{1/2}[c_{mnk}\bar{\mathbf{g}}_{mn}+ a_{mnk} \hat{\mathbf{g}}_{mnk}],  
\end{align}
where $a_{mnk}= \sqrt{\frac{\beta_{mnk}-\theta_{mnk}}{1+\tau p_p \sum_{k=1}^K{\beta_{mnk}}}}$ and $\hat{\mathbf{g}}_{mnk}\sim \mathcal{CN}(\mathbf{0},\frac{1}{L}\mathbf{I}_L) $ is independent of $\bar{\mathbf{g}}_{mn}$. By substituting \eqref{h_mat} in \eqref{interferenc2}, we find
\small{
\begin{align}\label{deterministic}
\nonumber &{\mathbf{h}}_{mnk}^\mathrm{H} \mathbf{\Sigma}_{m}^{-1} \bar{\mathbf{H}}_{mn} \mathbf{P}_{mn} \bar{\mathbf{H}}_{mn}^\mathrm{H}\mathbf{\Sigma}_{m}^{-1}{\mathbf{h}}_{mnk} \\
\nonumber &= \frac{1}{L}c_{mnk}^2\bar{\mathbf{g}}_{mn}^\mathrm{H}\mathbf{\Theta}_{mn}^{1/2}\mathbf{C}_{m}^{-1}\bar{\mathbf{H}}_{mn}\mathbf{P}_{mn} \bar{\mathbf{H}}_{mn}^\mathrm{H}\mathbf{C}_{m}^{-1}\mathbf{\Theta}_{mn}^{1/2}\bar{\mathbf{g}}_{mn} \\
\nonumber &+ \frac{1}{L}a_{mnk}^2\hat{\mathbf{g}}_{mnk}^\mathrm{H}\mathbf{\Theta}_{mn}^{1/2}\mathbf{C}_{m}^{-1}\bar{\mathbf{H}}_{mn}\mathbf{P}_{mn} \bar{\mathbf{H}}_{mn}^\mathrm{H}\mathbf{C}_{m}^{-1}\mathbf{\Theta}_{mn}^{1/2}\hat{\mathbf{g}}_{mnk}\\
\nonumber &+ \frac{1}{L}c_{mnk}a_{mnk}\bar{\mathbf{g}}_{mn}^\mathrm{H}\mathbf{\Theta}_{mn}^{1/2}\mathbf{C}_{m}^{-1}\bar{\mathbf{H}}_{mn}\mathbf{P}_{mn} \bar{\mathbf{H}}_{mn}^\mathrm{H}\mathbf{C}_{m}^{-1}\mathbf{\Theta}_{mn}^{1/2}\hat{\mathbf{g}}_{mnk} \\
 &+\frac{1}{L}c_{mnk}a_{mnk}\hat{\mathbf{g}}_{mnk}^\mathrm{H}\mathbf{\Theta}_{mn}^{1/2}\mathbf{C}_{m}^{-1}\bar{\mathbf{H}}_{mn}\mathbf{P}_{mn} \bar{\mathbf{H}}_{mn}^\mathrm{H}\mathbf{C}_{m}^{-1}\mathbf{\Theta}_{mn}^{1/2}\bar{\mathbf{g}}_{mn}.
\end{align}}
where $\mathbf{C}_{m} = \mathbf{\Gamma}_{m} + \alpha \mathbf{I}_L$ with $\mathbf{\Gamma}_{m} = \frac{1}{L}\bar{\mathbf{H}}_{m} \bar{\mathbf{H}}_{m}^\mathrm{H} $. The equation in  \eqref{deterministic} can be further written as \eqref{determinis}.
\begin{figure*}
\small{
\begin{align}\label{determinis}
\nonumber {\mathbf{h}}_{mnk}^\mathrm{H} \mathbf{\Sigma}_{m} \bar{\mathbf{H}}_{mn} \mathbf{P}_{mn} \bar{\mathbf{H}}_{mn}^\mathrm{H}\mathbf{\Sigma}_{m}{\mathbf{h}}_{mnk}
 &= \frac{1}{L}c_{mnk}^2\bar{\mathbf{g}}_{mn}^\mathrm{H}\mathbf{\Theta}_{mn}^{1/2}\mathbf{C}_{mn}^{-1}\bar{\mathbf{H}}_{mn}\mathbf{P}_{mn} \bar{\mathbf{H}}_{mn}^\mathrm{H}\mathbf{C}_{m}^{-1}\mathbf{\Theta}_{mn}^{1/2}\bar{\mathbf{g}}_{mn} \\
\nonumber &+ \frac{1}{L}c_{mnk}^2\bar{\mathbf{g}}_{mn}^\mathrm{H}\mathbf{\Theta}_{mn}^{1/2}(\mathbf{C}_{m}^{-1}- \mathbf{C}_{mn}^{-1})\bar{\mathbf{H}}_{mn}\mathbf{P}_{mn} \bar{\mathbf{H}}_{mn}^\mathrm{H}\mathbf{C}_{m}^{-1}\mathbf{\Theta}_{mn}^{1/2}\bar{\mathbf{g}}_{mn}\\
\nonumber &+ \frac{1}{L}a_{mnk}^2\hat{\mathbf{g}}_{mnk}^\mathrm{H}\mathbf{\Theta}_{mn}^{1/2}\mathbf{C}_{mn}^{-1}\bar{\mathbf{H}}_{mn}\mathbf{P}_{mn} \bar{\mathbf{H}}_{mn}^\mathrm{H}\mathbf{C}_{m}^{-1}\mathbf{\Theta}_{mn}^{1/2}\hat{\mathbf{g}}_{mnk}\\
\nonumber &+ \frac{1}{L}a_{mnk}^2\hat{\mathbf{g}}_{mnk}^\mathrm{H}\mathbf{\Theta}_{mn}^{1/2}(\mathbf{C}_{m}^{-1}-\mathbf{C}_{mn}^{-1})\bar{\mathbf{H}}_{mn}\mathbf{P}_{mn} \bar{\mathbf{H}}_{mn}^\mathrm{H}\mathbf{C}_{m}^{-1}\mathbf{\Theta}_{mn}^{1/2}\hat{\mathbf{g}}_{mnk}\\
\nonumber &+ \frac{1}{L}c_{mnk}a_{mnk}\bar{\mathbf{g}}_{mn}^\mathrm{H}\mathbf{\Theta}_{mn}^{1/2}\mathbf{C}_{mn}^{-1}\bar{\mathbf{H}}_{mn}\mathbf{P}_{mn} \bar{\mathbf{H}}_{mn}^\mathrm{H}\mathbf{C}_{m}^{-1}\mathbf{\Theta}_{mn}^{1/2}\hat{\mathbf{g}}_{mnk} \\
\nonumber &+ \frac{1}{L}c_{mnk}a_{mnk}\bar{\mathbf{g}}_{mn}^\mathrm{H}\mathbf{\Theta}_{mn}^{1/2}(\mathbf{C}_{m}^{-1}-\mathbf{C}_{mn}^{-1})\bar{\mathbf{H}}_{mn}\mathbf{P}_{mn} \bar{\mathbf{H}}_{mn}^\mathrm{H}\mathbf{C}_{m}^{-1}\mathbf{\Theta}_{mn}^{1/2}\hat{\mathbf{g}}_{mnk}\\
\nonumber &+\frac{1}{L}c_{mnk}a_{mnk}\hat{\mathbf{g}}_{mnk}^\mathrm{H}\mathbf{\Theta}_{mn}^{1/2}\mathbf{C}_{mn}^{-1}\bar{\mathbf{H}}_{mn}\mathbf{P}_{mn} \bar{\mathbf{H}}_{mn}^\mathrm{H}\mathbf{C}_{m}^{-1}\mathbf{\Theta}_{mn}^{1/2}\bar{\mathbf{g}}_{mn}\\
 &+\frac{1}{L}c_{mnk}a_{mnk}\hat{\mathbf{g}}_{mnk}^\mathrm{H}\mathbf{\Theta}_{mn}^{1/2}(\mathbf{C}_{m}^{-1}-\mathbf{C}_{mn}^{-1})\bar{\mathbf{H}}_{mn}\mathbf{P}_{mn} \bar{\mathbf{H}}_{mn}^\mathrm{H}\mathbf{C}_{m}^{-1}\mathbf{\Theta}_{mn}^{1/2}\bar{\mathbf{g}}_{mn}.
\end{align}
}
\rule{\textwidth}{0.25mm}
\end{figure*}
Employing Lemma $2$ of \cite{Wagner2012}, $\mathbf{C}_{m}^{-1} - \mathbf{C}_{mn}^{-1} = -\mathbf{C}_{m}^{-1}(\mathbf{C}_{m} - \mathbf{C}_{mn})\mathbf{C}_{mn}^{-1} $ with $\mathbf{C}_{m} - \mathbf{C}_{mn} = \mathbf{\Theta}_{mn}^{1/2} \bar{\mathbf{g}}_{mn} \bar{\mathbf{g}}_{mn}^\mathrm{H} \mathbf{\Theta}_{mn}^{1/2}$. Then, \eqref{determinis} can be written as \eqref{quadratic_form} shown at the top of the next page,
\begin{figure*}
\begin{align}\label{quadratic_form}
\nonumber {\mathbf{h}}_{mnk}^\mathrm{H} \mathbf{\Sigma}_{m} \bar{\mathbf{H}}_{mn} \mathbf{P}_{mn} \bar{\mathbf{H}}_{mn}^\mathrm{H}\mathbf{\Sigma}_{m}{\mathbf{h}}_{mnk}  &= c_{mnk}^2\left(\frac{1}{L}\bar{\mathbf{g}}_{mn}^\mathrm{H} \mathbf{B}_{mn}\bar{\mathbf{g}}_{mn} - \frac{1}{L}\bar{\mathbf{g}}_{mn}^\mathrm{H} \mathbf{A}_{mn} \bar{\mathbf{g}}_{mn}\bar{\mathbf{g}}_{mn}^\mathrm{H} \mathbf{B}_{mn} \bar{\mathbf{g}}_{mn}\right)\\
\nonumber &+ a_{mnk}^2\left(\frac{1}{L}\hat{\mathbf{g}}_{mnk}^\mathrm{H}  \mathbf{B}_{mn}\hat{\mathbf{g}}_{mnk}- \frac{1}{L}\hat{\mathbf{g}}_{mnk}^\mathrm{H}  \mathbf{A}_{mn}\bar{\mathbf{g}}_{mn}\bar{\mathbf{g}}_{mn}^\mathrm{H} \mathbf{B}_{mn} \hat{\mathbf{g}}_{mnk}\right)\\
\nonumber &+ a_{mnk}c_{mnk}\left(\frac{1}{L}\bar{\mathbf{g}}_{mn}^\mathrm{H}\mathbf{B}_{mn} \hat{\mathbf{g}}_{mnk}-\frac{1}{L}\bar{\mathbf{g}}_{mn}^\mathrm{H}\mathbf{A}_{mn}\bar{\mathbf{g}}_{mn} \bar{\mathbf{g}}_{mn}^\mathrm{H} \mathbf{B}_{mn} \hat{\mathbf{g}}_{mnk}\right)\\
 &+ a_{mnk}c_{mnk}\left(\frac{1}{L}\hat{\mathbf{g}}_{mnk}^\mathrm{H}\mathbf{B}_{mn} \bar{\mathbf{g}}_{mn}-\frac{1}{L}\hat{\mathbf{g}}_{mnk}^\mathrm{H}\mathbf{A}_{mn}\bar{\mathbf{g}}_{mn} \bar{\mathbf{g}}_{mn}^\mathrm{H} \mathbf{B}_{mn} \bar{\mathbf{g}}_{mn}\right).
\end{align}
\rule{\textwidth}{0.25mm}
\end{figure*}
where $\mathbf{B}_{mn} = \mathbf{\Theta}_{mn}^{1/2}\mathbf{C}_{mn}^{-1}\bar{\mathbf{H}}_{mn}\mathbf{P}_{mn} \bar{\mathbf{H}}_{mn}^\mathrm{H}\mathbf{C}_{m}^{-1}\mathbf{\Theta}_{mn}^{1/2}$ and $\mathbf{A}_{mn} = \mathbf{\Theta}_{mn}^{1/2}\mathbf{C}_{m}^{-1}\mathbf{\Theta}_{mn}^{1/2}$.
Then, by applying Lemma $7$ of \cite{Wagner2012} to each quadratic form in \eqref{quadratic_form}, we obtain
\begin{eqnarray}\label{inter}
&&\nonumber a. ~ \bar{\mathbf{g}}_{mn}^\mathrm{H} \mathbf{A}_{mn} \bar{\mathbf{g}}_{mn} \mathop \to \limits^{a.s.} \frac{u_{mn} }{1+u_{mn} },\\
&&\nonumber b. ~ \hat{\mathbf{g}}_{mnk}^\mathrm{H} \mathbf{A}_{mn} \bar{\mathbf{g}}_{mn} \mathop \to \limits^{a.s.} 0 ,\\
&& \nonumber c. ~\bar{\mathbf{g}}_{mn}^\mathrm{H} \mathbf{B}_{mn} \bar{\mathbf{g}}_{mn}\mathop \to \limits^{a.s.} \frac{u'_{mn} }{1+u_{mn} },\\
 &&\nonumber d. ~\hat{\mathbf{g}}_{mnk}^\mathrm{H} \mathbf{B}_{mn} \hat{\mathbf{g}}_{mnk}\mathop \to \limits^{a.s.} u'_{mn},\\
&& e. ~\hat{\mathbf{g}}_{mnk}^\mathrm{H} \mathbf{B}_{mn} \bar{\mathbf{g}}_{mn}\mathop \to \limits^{a.s.} 0,
\end{eqnarray}
where $u_{mn}  = \frac{1}{L} {\rm{tr}}[\mathbf{\Theta}_{mn}\mathbf{C}_{mn}^{-1}]$ and $u'_{mn}  =  \frac{1}{L} {\rm{tr}} [\mathbf{P}_{mn} \bar{\mathbf{H}}_{mn}^\mathrm{H}\mathbf{C}_{mn}^{-1}\mathbf{\Theta}_{mn}\mathbf{C}_{mn}^{-1}\bar{\mathbf{H}}_{mn}]$. By substituting \eqref{inter} in \eqref{quadratic_form}, we obtain
\begin{align}\label{quad_eqn}
 \nonumber &{\mathbf{h}}_{mnk}^\mathrm{H} \mathbf{\Sigma}_{m}^{-1} \bar{\mathbf{H}}_{mn} \mathbf{P}_{mn} \bar{\mathbf{H}}_{mn}^\mathrm{H}\mathbf{\Sigma}_{m}^{-1}{\mathbf{h}}_{mnk} \\
 \nonumber &\mathop \to \limits^{a.s.} c_{mnk}^2 \left(  \frac{1}{L} \frac{u'_{mn} }{1+u_{mn} } -  \frac{1}{L} \frac{u_{mn} u'_{mn} }{(1+u_{mn} )^2} \right)   \\
\nonumber  &+a_{mnk}^2  (\frac{1}{L} u'_{mn})\\
  &  \mathop \to \limits^{a.s.} c_{mnk}^2\frac{1}{L} \frac{u'_{mn} }{(1+u_{mn} )^2} +  a_{mnk}^2\frac{1}{L}u'_{mn},
\end{align}
By employing Lemma $6$ of \cite{Wagner2012}, we have
\begin{align}\label{as_u2}
\nonumber u_{mn}  &\mathop \to \limits^{a.s.}  \frac{1}{L} {\rm{tr}} [\mathbf{\Theta}_{mn}\mathbf{C}_{m}^{-1}] \\
\nonumber &\mathop \to \limits^{a.s.} e_{mn}^o \\
\frac{1}{L}u'_{mn}  &\mathop \to \limits^{a.s.} \bar{\Upsilon}_{mn} ,
\end{align}
in which $e_{mn}^o$ is given in \eqref{TT1} and $\bar{\Upsilon}_{mn} =  \frac{1}{L^2}{\rm{tr}}[\mathbf{P}_{mn} \bar{\mathbf{H}}_{mn}^\mathrm{H}\mathbf{C}_{m}^{-1}\mathbf{\Theta}_{mn}\mathbf{C}_{m}^{-1}\bar{\mathbf{H}}_{mn}]$, which can be written as
\begin{align}
\bar{\Upsilon}_{mn} = \frac{1}{L} \sum\limits_{\mystack{n'=1}{n' \ne 
n}}^N{p_{mn'} \bar{\mathbf{g}}_{mn'}^\mathrm{H} \mathbf{\Theta}_{mn'}^{1/2}\mathbf{C}_{m}^{-1}\mathbf{\Theta}_{mn} \mathbf{C}_{m}^{-1}\mathbf{\Theta}_{mn'}^{1/2}\bar{\mathbf{g}}_{mn'}}.
\end{align}
Based on Theorem 2 of \cite{Wagner2012}, we have
\begin{align}
\bar{\Upsilon}_{mn} \mathop \to \limits^{a.s.}\Upsilon_{mn}, 
\end{align}
where $\Upsilon_{mn}$ is given in \eqref{TT4}. 

Inserting \eqref{quad_eqn} and \eqref{as_u2} in \eqref{inter-cluster}, we obtain
\begin{eqnarray}\label{pi2_fin}
\nonumber &\sum\limits_{\mystack{n'=1}{n' \ne 
n}}^N{p_{n'} \sum\limits_{m = 1}^M{\mathbb{E} \left\{ \left|\mathbf{h}_{mnk}^\mathrm{H}\mathbf{w}_{mn'}\right|^2 \right\}}} \\
&\mathop \to \limits^{a.s.}  \sum\limits_{m = 1}^M{ \Upsilon_{mn}\left(\frac{c_{mnk}^2}{(1+e_{mn}^o)^2}+ a_{mnk}^2\right)}.
\end{eqnarray}
\end{itemize}
Finally, by substituting \eqref{Eqn:desired_n}, \eqref{Eqn:gain uncertainty3}, \eqref{SIC_INTER} and \eqref{pi2_fin} in \eqref{Eqn:INTERFERENCE}, $\gamma_{nk}^{\rm mRZF}$ is obtained as \eqref{Eqn:RZF_rate}.  

\bibliographystyle{ieeetr}
\bibliography{ref}

\end{document}